	\definecolor{color1}{rgb}{0,0,0.8}
\newcommand{\bluf}[1]{\color{blue}{\bf{#1}}}
\newcommand{\gref}[1]{\color{green}{\bf{#1}}}
\newcommand{\magf}[1]{\color{magenta}{\bf{#1}}}
\newcommand{\cyf}[1]{\color{cyan}{\bf{#1}}}
\begin{document}
\pagestyle{headings}

\mainmatter
\title{How to Extract the Geometry and Topology from Very Large 3D Segmentations}
\titlerunning{Geometry and Topology Extraction from Very Large 3D Segmentations}
\authorrunning{B.~Andres et al.}
\author{Bjoern Andres, Ullrich Koethe, Thorben Kroeger and Fred A.~Hamprecht}
\institute{HCI, IWR, University of Heidelberg\\
	\href{http://hci.iwr.uni-heidelberg.de}{http://hci.iwr.uni-heidelberg.de},
	\href{mailto:bjoern.andres@iwr.uni-heidelberg.de}{bjoern.andres@iwr.uni-heidelberg.de}
}

\maketitle

\begin{abstract}
Segmentation is often an essential intermediate step in image analysis. A volume
segmentation characterizes the underlying volume image in terms of geometric
information--segments, faces between segments, curves in which several faces
meet--as well as a topology on these objects. Existing algorithms encode this
information in designated data structures, but require that these data
structures fit entirely in Random Access Memory (RAM). Today, 3D images with 
several billion voxels are acquired, e.g. in structural neurobiology. Since 
these large volumes can no longer be processed with existing methods, we present 
a new algorithm which performs geometry and topology extraction with a runtime 
linear in the number of voxels and log-linear in the number of faces and curves. 
The parallelizable algorithm proceeds in a block-wise fashion and constructs a
consistent representation of the entire volume image on the hard drive, making
the structure of very large volume segmentations accessible to image analysis.
The parallelized ``CGP'' C++ source code, free command line tools and MATLAB mex
files are avilable from \url{http://hci.iwr.uni-heidelberg.de/software.php}.
\end{abstract}

\section{Introduction}

Segmentations of volume images partition the volume into different connected
components: \emph{segments}, \emph{faces} between segments, \emph{curves} in 
which several faces meet, as well as the \emph{points} between these curves, 
(Fig. \ref{figure:geo-boundaries-lines}). The geometry and topology of these 
components are essential in many analyses. In order to compute features that
describe this geometry and topology, a data structure is needed that provides
fast access to all components and their adjacency. Volume segmentations are 
usually stored simply as \emph{volume labelings}, i.e.~as 3-dimensional arrays
in which each entry is a label that uniquely identifies the segment to which 
the voxel belongs. This form of storage does not represent geometry and 
topology explicitly. Instead, faces between segments and the curves between 
these faces are encoded only implicitly, as adjacent voxels whose segment 
labels differ. All that can be obtained from an array in constant computation 
time is the segment label at a given voxel. Neither is the set of voxels that 
belong to the same segment readily available, nor are the faces between 
adjacent segments or the curves in which several of these faces meet. It is not
stored explicitly which segments are adjacent, separated by which faces, and in 
which curves adjacent faces meet.

The new algorithm presented in this article takes a volume labeling as input 
and extracts the geometry and topology of all components in a block-wise 
fashion, in a runtime that is linear in the number of voxels and log-linear in 
the number of faces and curves. Blocks of the volume labeling can be processed 
either sequentially, on a single computer that might have only a few hundred 
megabytes of RAM, or in parallel, on several computers, which facilitates 
geometry and topology extraction from datasets that consist of more than $10^9$
voxels. In both cases, a consistent representation of the entire volume 
segmentation is constructed on the hard drive, in a data structure from which 
all connected components and their adjacency can be obtained in constant 
computation time. The new algorithm makes the geometry and topology of large 
volume segmentations accessible to image analysis. Its parallelized C++ source 
code, free command line tools and MATLAB mex files are provided.

This article is organized as follows: Related work is discussed in 
Section~\ref{section:geo-related-work}. 
In 
Section~\ref{section:geo-representation}, the data structure that captures the 
geometry and topology of a volume segmentation is introduced. The algorithm for 
its construction is defined in 
Section~\ref{section:geo-algorithm} 
and extended in 
Section~\ref{section:geo-block-wise} 
to work with limited RAM and in parallel. The correctness of the algorithm is 
proved and its complexity analyzed. 
Section~\ref{section:geo-redundant-storage} 
describes the efficient storage of the data structure on the hard drive, and 
Section~\ref{section:geo-conclusion} 
concludes the article.
\vfill

\begin{figure*}
\centering
\includegraphics[width=0.49\textwidth]{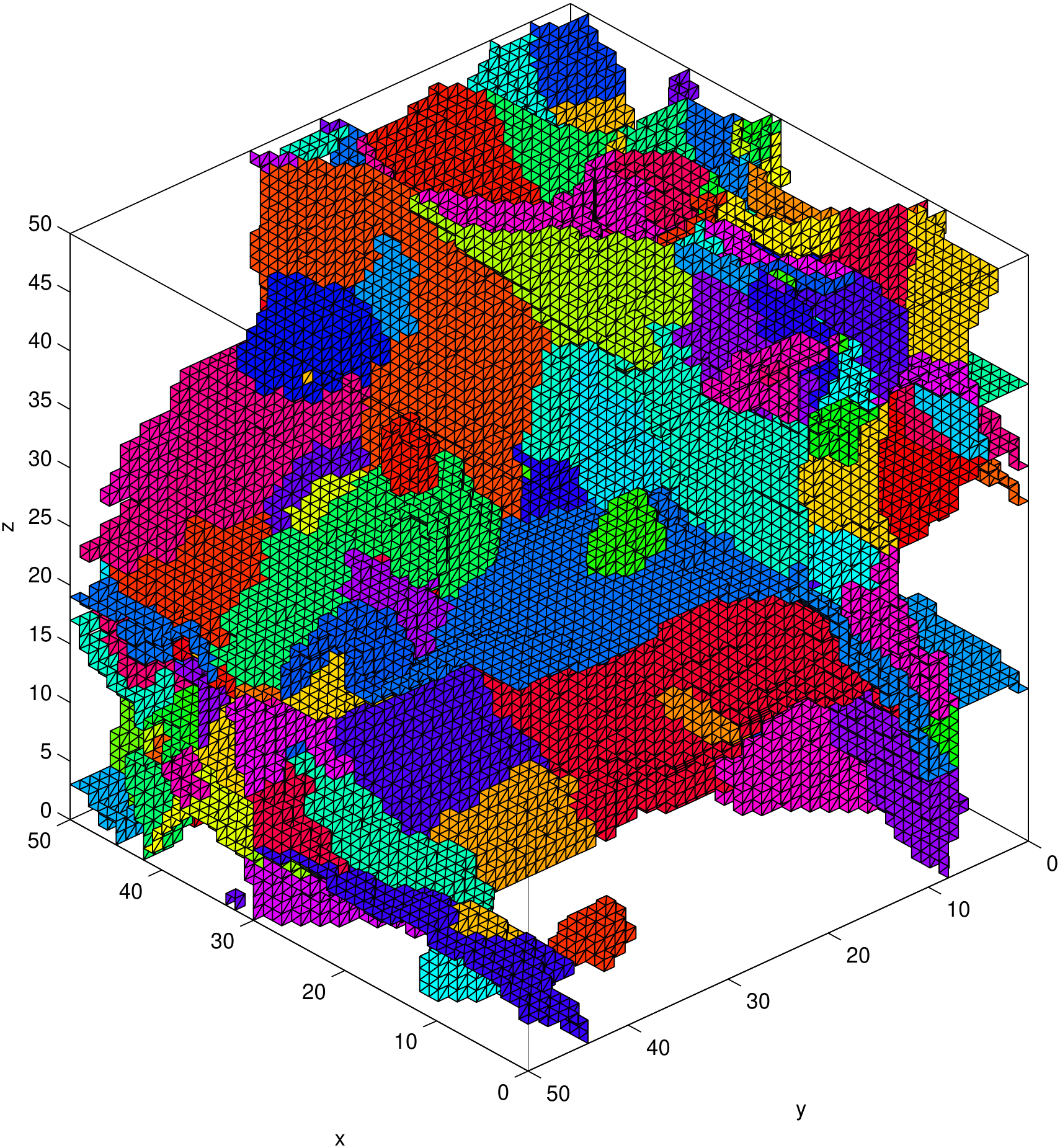}
\includegraphics[width=0.49\textwidth]{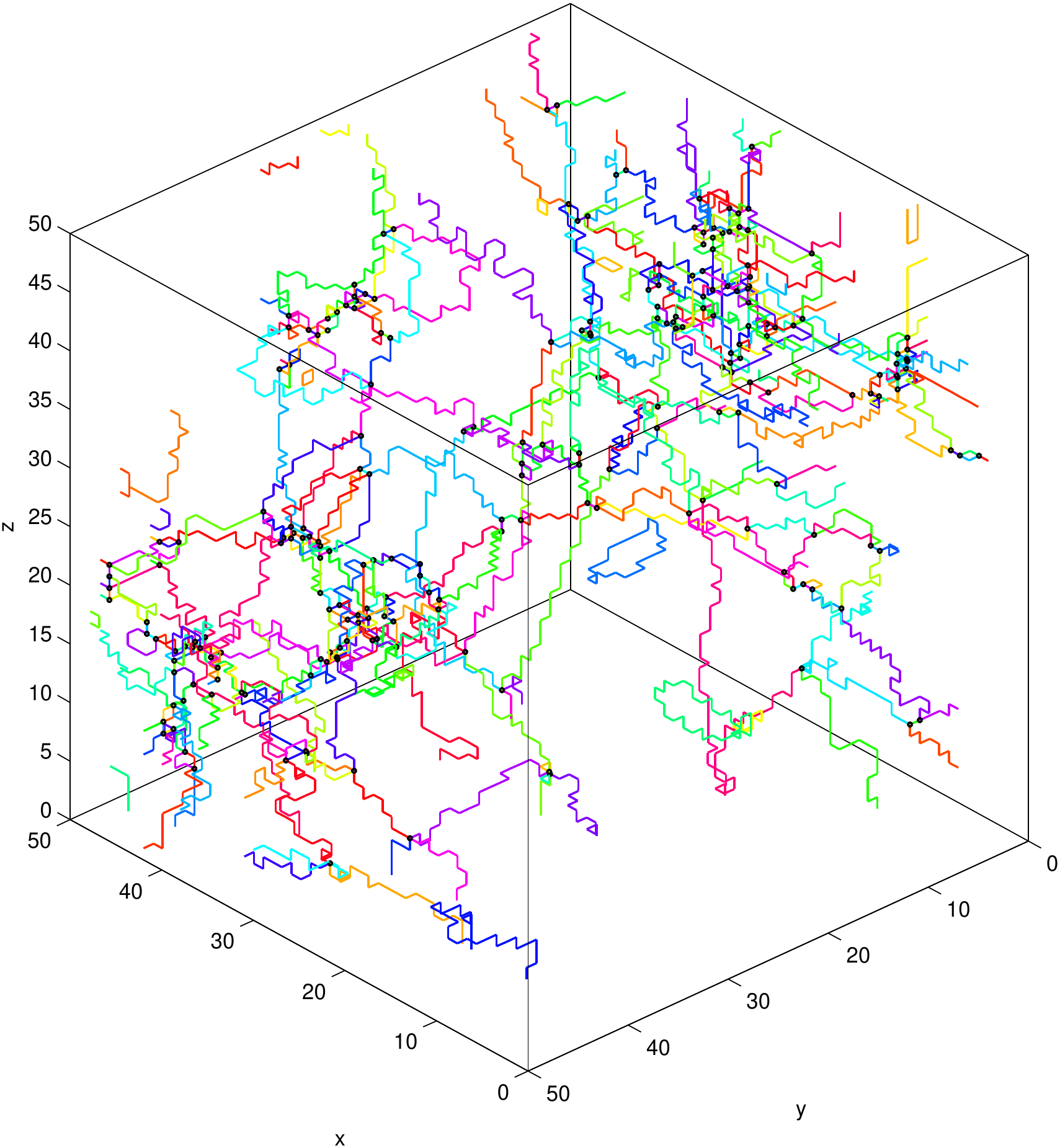}
\caption{A volume segmentation consists of segments, faces between
adjacent segments (left), the curves in which several of these faces meet,
as well as the points between these curves (right).}
\label{figure:geo-boundaries-lines}
\end{figure*}

\section{Related Work}
\label{section:geo-related-work}
The first explicit representation of all components of an image segmentation 
was proposed by Brice and Fennema
\cite{brice-1970}; 
it encodes segments as sets of pixels, curves between segments as sets of 
inter-pixel edges, and the end points of these curves as pixel corners. Naive 
attempts to represent the different components of a segmentation all as sets of
pixels on the pixel grid of the underlying image are topologically 
inconsistent, as was shown in 
\cite{pavlidis-1977} 
and proven generally and rigorously in
\cite{kovalevsky-1989,kovalevsky-1993}. 
To overcome this inconsistency, Khalimsky
\cite{khalimsky-1990} 
introduced the \emph{topological grid} whose points correspond to pixels, 
inter-pixel edges, and pixel corners. The concept of a 3-dimensional 
topological grid is depicted in 
Fig.~\ref{figure:geo-topological-grid}. 

Data structures that store, for each component of a segmentation, all points 
on the topological grid that constitute this component were proposed and 
implemented by 
\cite{meine-2005} 
for image segmentations and envisioned by 
\cite{damiand-2008} 
for volume image segmentations. However, a storage concept that is suitable 
for large volume segmentations has so far been missing.

Along with representations of the geometry of segmentations, i.e.~their 
components, at least three different structures have been used to encode the
neighborhood system on these objects: \emph{Region Adjacency Graphs} 
(RAGs) 
\cite{pavlidis-1977} 
encode the adjacency of segments. RAGs do not capture the topology of a 
segmentation completely because several disconnected faces that separate the 
same two segments correspond to the same edge in the 
RAG. 
Kropatsch \cite{kropatsch-1995} 
introduces multiple edges and self-loops in the RAG which results in a 
multi-graph whose dual graph represents faces as vertices and the adjacency of 
faces as edges. This concept can be implemented as a data structure. However, 
both the graph and its dual have to be stored and maintained which is 
algorithmically challenging. 

\emph{Combinatorial maps} were introduced in image analysis in
\cite{braquelaire-1991} 
and are used as data structures, e.g.~in
\cite{meine-2005} 
as well as in some algorithms of the Computational Geometry Algorithms Library 
(CGAL)\footnote{\url{www.cgal.org}}. 
The extension of combinatorial maps to higher dimensions is involved but 
possible 
\cite{lienhardt-1989,lienhardt-1991} 
and has facilitated the development of the 3-dimensional topological map \cite{bertrand-2000,damiand-2003}. 
This map captures not only the topology of a segmentation but also its 
embedding into the segmented space, i.e.~containment relations and orders of 
objects
\cite{lienhardt-1989,damiand-2008}. 
It is therefore more expensive to construct and manipulate than a data 
structure that encodes only the topology.

A simple structure that encodes only the topology is a finite \emph{cellular 
complex},
cf.~\cite{munkres-1995,hatcher-2002} 
also known as a cell complex or CW-complex 
\cite{klette-2000} 
where CW stands for the two axioms closure-finiteness and weak topology,
cf.~\cite{hatcher-2002}. 
Cellular complexes were first used in image processing in
\cite{kovalevsky-1989,kovalevsky-1993}. 
Their generalization to 3D is simple and intuitive. 

The main focus of previous efforts to extract and encode the geometry and 
topology of segmentations has not been on large volume segmentations but on the
efficient processing of the merging and splitting of segments. These operations 
are required within the context of inter-active segmentation. In
\cite{meine-2004,damiand-2008},
representations of the geometry and topology are constructed incrementally, 
using random access to already constructed parts of the data structure. In 
order for these algorithms to work efficiently, the underlying data structures 
need to be kept entirely in RAM. To extract the geometry and topology of a 
volume segmentation of $2,000^3$ voxels, $2,000^3 \cdot 2^3 \cdot 4$ bytes 
$\approx 238$~GB of RAM are required for the labeling of the topological grid, 
an amount that is not available on present day desktop computers. Beyond 
$3,500^3$ voxels, even the $1$~TB of RAM of a large server are insufficient. 
The method presented in this article overcomes this limitation by means of 
block-wise processing. It makes geometry and topology extraction from large 
volume segmentations possible.

\begin{SCfigure}[3.5][t]
\centering
\includegraphics[width=3cm]{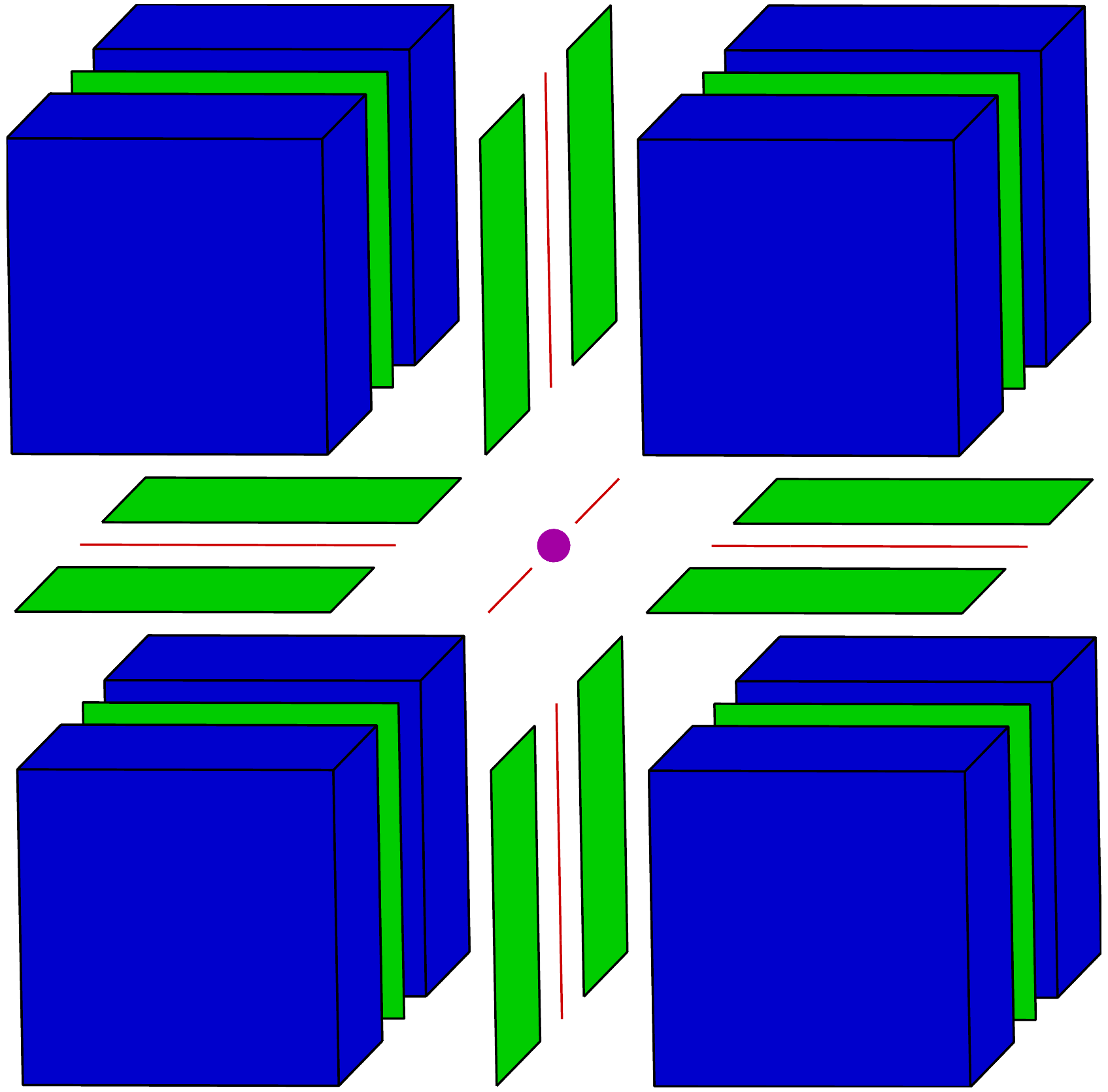}
\caption{A topological grid $T$ is used to represent the geometry of a volume 
segmentation explicitly. Its elements are called cells. A cell 
$(t_1, t_2, t_3) \in T$ with $j$ odd entries is called a $j$-cell. 3-cells, 
2-cells, 1-cells, and 0-cells respectively represent voxels (blue), faces 
between voxels (green), lines between faces (red), and points between lines 
(purple).}
\label{figure:geo-topological-grid}
\end{SCfigure}

\section{From Voxels to Geometry and Topology}
\label{section:geo-representation}
The starting point for geometry and topology extraction is a volume image on a
\emph{voxel grid} $G = \{1, \ldots, n_1\} \times \{1, \ldots, n_2\} \times 
\{1, \ldots, n_3\}$ whose extent in each of the three dimensions is given by 
$n_1, n_2, n_3 \in \mathbb{N}$. Two voxels $v,w \in G$ are said to be 
\emph{connected} if $\sum_{j=1}^{3}{|v_j - w_j|} = 1$. Each voxel is thus 
connected to 6 other voxels unless it is at the boundary of the grid. For every
voxel, the connected voxels are called its \emph{6-neighbors}. A set of voxels
$U \subseteq G$ is called connected if and only if any two distinct voxels 
$v,w \in U$ are linked by a path in $U$, i.e.~by a sequence of voxels in $U$ 
that starts with $v$ and ends with $w$, in which each voxel is connected to its 
predecessor.

A volume segmentation partitions the voxel grid $G$ into connected components 
called segments. A \emph{volume labeling}
\begin{equation}
\sigma: G \rightarrow \mathbb{N}
\end{equation}
assigns to each voxel a label that identifies the segment to which the voxel 
belongs. Since each voxel belongs to a segment, the faces between segments, the
curves between these faces and the points between these curves
(Fig.~\ref{figure:geo-boundaries-lines}) cannot be represented on the voxel 
grid. The structure that is used for this purpose is a \emph{topological grid},
\begin{equation}
T = \{1, \ldots, 2 n_1-1\} \times \{1, \ldots, 2 n_2-1\} \times \{1, \ldots, 2 n_3-1\}
\enspace .
\end{equation}
This grid has about eight times the size of the voxel grid. Its elements are 
called \emph{cells}. Cells with $j$ odd entries are called \emph{$j$-cells},
cf.~Fig.~\ref{figure:geo-topological-grid}. 

Segments, faces between segments, curves between faces, and points between 
curves correspond to connected components of cells. Two relations are crucial 
to the definition of these connected components. The first relation the 
$\Gamma$-neighborhood of cells. It is depicted in the third column of 
Fig.~\ref{figure:geo-cell_complex}.

\begin{definition}
The \emph{$\Gamma$-neighborhood} is the mapping $\Gamma: T \rightarrow 
\mathcal{P}(T)$ such that, for each $j \in \{0, \ldots, 3\}$ and any $j$-cell
$t \in T$, $\Gamma(t)$ consists of all 6-neighbors of $t$ on the
topological grid $T$ that are $(j+1)$-cells.
\end{definition}
Any 2-cell, for instance, has two $\Gamma$-neighbors that correspond to two 
voxels.

The second important relation is the \emph{connectivity} of cells; it is
depicted in the last column of Fig. \ref{figure:geo-cell_complex}.
\begin{definition}
The connectivity relation ``$\leftrightarrow$''$\ \subseteq T \times T$
connects any two cells $t_1, t_2 \in T$ (denoted $t_1 \leftrightarrow t_2$)
if and only if there exists a $t \in T$ such that both $t_1$ and $t_2$ are
$\Gamma$-neighbors of $t$.
\end{definition}

\begin{SCfigure}[1][t]
\renewcommand{\arraystretch}{1.2}
\vspace{5pt}
\centering
\begin{tabular}{m{0.021\columnwidth}m{0.138\columnwidth}m{0.138\columnwidth}m{0.138\columnwidth}}
\hline
	$n$
&	\centering $n$-cell
&	\centering $\Gamma$
&	\hspace{4mm} $\leftrightarrow$\\
\hline
	\vspace{1mm}0
&	\vspace{1mm}\centering \includegraphics[scale=0.07]{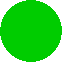}
&	\vspace{1mm}\centering \includegraphics[scale=0.07]{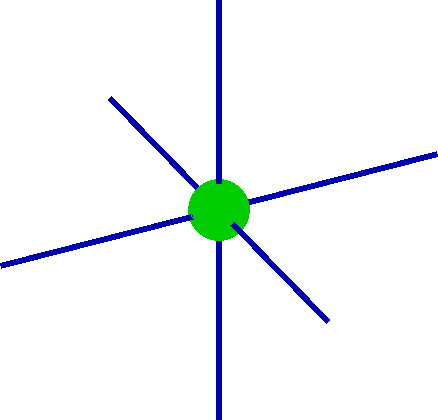}
&	\vspace{1mm}\hspace{5mm} $\emptyset$\\

	1
&	\centering \includegraphics[scale=0.07]{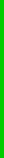}
&	\centering \includegraphics[scale=0.07]{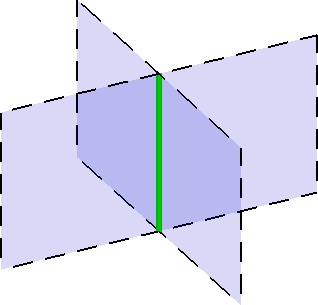}
&	\hspace{2mm} \includegraphics[scale=0.07]{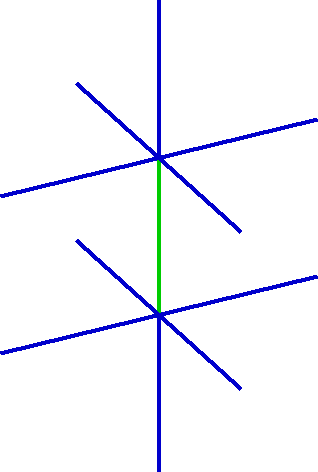}\\

	2
&	\centering \includegraphics[scale=0.07]{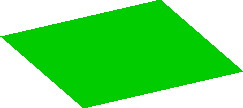}
&	\centering \includegraphics[scale=0.07]{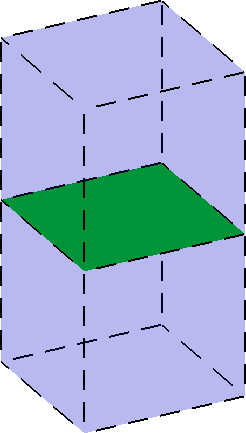}
&	\includegraphics[scale=0.07]{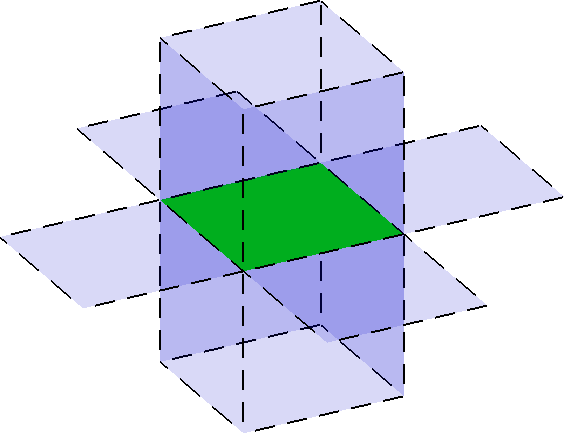}\\

	3
&	\centering \includegraphics[scale=0.07]{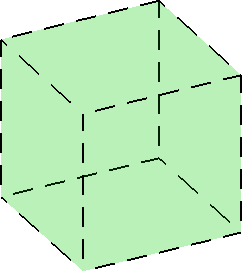}
&	\centering $\emptyset$
&	\includegraphics[scale=0.07]{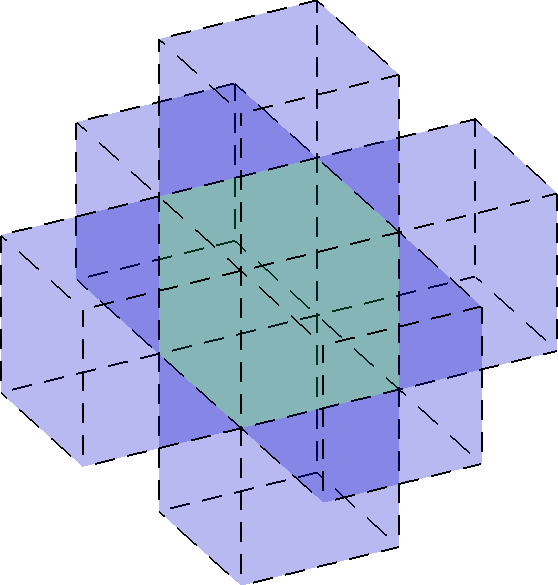}\\
\hline
\end{tabular}
\caption{Two relations are crucial to the definition of connected
components of cells. (i) The $\Gamma$-neighborhood of a $j$-cell
consists of all its 6-neighbors on the topological grid $T$ that 
are $(j+1)$-cells. (ii) The connectivity relation ``$\leftrightarrow$''
connects two cells $t_1, t_2 \in T$ if and only if there exists a 
third cell $t \in T$ such that both $t_1$ and $t_2$ are 
$\Gamma$-neighbors of $t$.}
\label{figure:geo-cell_complex}
\end{SCfigure}

Segments, faces, curves and points can now be defined recursively as connected
components of 3-cells, 2-cells, 1-cells, and 0-cells which are called 
$j$-components. In the following definition, a distinction is made between 
active and inactive cells.

\begin{definition}
\label{definition:connected-components}
Any 3-cell is said to be \emph{active}. A set of all (active) 3-cells that 
belong to the same segment is called a \emph{3-component}.

For $j \in \{2,1,0\}$ and any $j$-cell $t \in T$, let $\{t_1,\ldots,t_{6-2j}\} 
= \Gamma(t)$ be its $\Gamma$-neighbors. For each $k \in \{1, \ldots, 6-2j\}$, 
let $\tau_k$ be the connected component of $t_k$ if $t_k$ is active, and let 
$\tau_k = \emptyset$ otherwise. Define $\theta(t)$ to be the set of connected 
components that occur precisely once in $(\tau_1, \ldots, \tau_{6-2j})$. These
connected components are said to be \emph{bounded} by $t$. Moreover, $t$ is 
called \emph{active} if $\theta(t) \not= \emptyset$.

For each $j \in \{0,1,2\}$, a \emph{$j$-component} is a maximal set 
$U \subseteq T$ with the following properties:

(i) Any $t \in U$ is an active $j$-cell.

(ii) All $t \in U$ bound the same connected components of $(j+1)$-cells,
i.e.~there exists a set $\Theta$ such that $\theta(t) = \Theta$,
for all $t \in U$.

(iii) For any $t_1, t_2 \in U$, there exists a path in $U$ from $t_1$ to $t_2$ 
in which each cell is connected via $\leftrightarrow$ to its predecessor.
\end{definition}

This definition captures not only the geometry but also the topology a a volume 
segmentation. Given, for instance, a face between two segments (i.e.~a 
2-component) and any of its 2-cells, $t$, $\theta(t)$ identifies the two 
segments (3-components) that are bounded by the face. In practice, $\theta(t)$ 
can be stored for each $j$-component. In theory, this corresponds to a cellular
complex representation that is isomorphic to the topology of the volume 
segmentation \cite{kovalevsky-1989}. Cellular complexes are defined as 
follows.

\begin{definition}
A \emph{cellular complex} is a triple $(C, <, \dim)$ in which ``$<$'' is a 
strict partial order in $C$ and $\dim: C \rightarrow \mathbb{N}_0$ maps 
elements of $C$ to non-negative integers such that $\forall c,c' \in C: 
c < c' \Rightarrow \dim(c) < \dim(c')$. The elements of $C$ are called 
\emph{cells}, ``$<$'' the \emph{bounding relation} and $\dim$ the 
\emph{dimension function} of the cellular complex.
\end{definition}

As an example, consider the cellular complex that contains as cells all points
of the topological grid $T$ (these points have already been referred to as 
cells), and as a bounding relation the transitive closure of the 
$\Gamma$-neighborhood, i.e.~the strict partial order that relates any
$t_1, t_2 \in T$ precisely if there exist an $n \in \mathbb{N}$ and a sequence
of $n$ cells $p: \{1, \ldots, n\} \rightarrow T$ such that $p(1) = t_1$, 
$p(n) = t_2$, and $\forall j \in \{2, \ldots, n\}: p(j) \in \Gamma(p(j-1))$.
The dimension function simply maps each cell to its order, either 0, 1, 2 or 3.
This cellular complex corresponds to the topology of the finest possible 
segmentation in which each voxel is a separate segment.

A coarser cellular complex contains as cells the $j$-components 
(Def.~\ref{definition:connected-components}). Its bounding relation is the 
transitive closure of the bounding relation $\theta$ of 
Def.~\ref{definition:connected-components}. Its dimension function maps each 
connected component to the order of its cells. This cellular complex captures 
the topology of the volume segmentation. It would makes sense to refer to its 
elements again as cells. However, to avoid confusion, the termed 
$j$-components is used throughout this article.

An important property of Def.~\ref{definition:connected-components} is that it 
is constructive and thus motivates an algorithm for the labeling of 
$j$-components.

\section{Extraction of Segmentation Geometry and Topology}
\label{section:geo-algorithm}

The geometry of a volume segmentation is made explicit by labeling not only the
segments but also the faces between segments, the curves between faces and the 
points between curves, i.e.~the $j$-components of the segmentation, on the 
topological grid $T$. The resulting \emph{topological label map}
\begin{equation}
\tau: T \rightarrow \mathbb{N}_0
\end{equation}
assigns a positive integer, representative of a $j$-component, to each active 
cell, and zero to all inactive cells. On the computer, $\tau$ is stored as a 
3-dimensional array. 

The first step towards this labeling is to copy all segment labels from the 
volume labeling $\sigma$ to the topological grid labeling $\tau$ by means of 
Algorithm \ref{algorithm:3-cell-labeling}. Subsequently, 2-components and 
1-components are identified and labeled by means of Algorithm 
\ref{algorithm:21-cell-labeling} that performs a depth-first-search. The 
auxiliary function \emph{once} used in this algorithm takes an input 
sequence of integers and returns an ordered sequence of the same length that 
contains those positive integers that occur precisely once in the input 
sequence. Additional entries in the output sequence are filled with zeros. 
Finally, active 0-cells are identified and labeled by means of Algorithm 
\ref{algorithm:0-cell-labeling}. Besides labeling the topological grid, 
these algorithms construct the bounding relation $\theta$ of $j$-components 
(Def.~\ref{definition:connected-components}) and thus a cell complex 
representation of the topology of the segmentation. 

Overall, this connected component labeling is an exact implementation of 
Def.~\ref{definition:connected-components} and is thus known to be correct. Its
runtime is linear in the number of voxels and so is its space complexity. The 
memory dynamically allocated for the stack is in addition bounded by the 
number of cells in the largest face. The absolute memory requirement 
nevertheless renders the procedure impractical for large volume segmentation. 
As shown in the introduction, the topological label map $\tau$ of a volume 
segmentation that consists of 2,000$^3$ voxels is too large to fit in the RAM 
of a desktop computer. Storing $\tau$ on the hard drive and loading blocks into
RAM on demand as a sub-routine of Algorithm \ref{algorithm:21-cell-labeling} 
does not solve the problem because any caching of blocks becomes inefficient if
segments and faces extend unsystematically across large parts of the volume 
which is often the case, in particular in connectomics datasets. Fortunately, 
the labeling itself can be constrained to small blocks of the volume which can 
be chosen systematically and processed independently, with very limited RAM and 
in parallel.

\begin{algorithm}
\label{algorithm:3-cell-labeling}
\caption{Labeling of 3-cells}
\KwIn{$\sigma: G \rightarrow \mathbb{N}$ (segment label map)}
\KwOut{$\tau: T \rightarrow \mathbb{N}$ (topological label map, preliminary),
$n \in \mathbb{N}$ (maximum segment label)}
$n \leftarrow 0$\;
\ForEach{$r \in G$}{
	$\tau(2 r - 1) \leftarrow \sigma(r)$\;
	\If{$\sigma(r) > n$}{
		$n \leftarrow \sigma(r)$\;
	}
}
\end{algorithm}

\begin{algorithm}
\label{algorithm:21-cell-labeling}
\caption{Labeling of 2-cells and 1-cells}
\KwIn{$\tau: T \rightarrow \mathbb{N}$ (topological label map, preliminary), $c \in \{1,2\}$ (cell order)}
\KwOut{$\tau$ (modified), $n \in \mathbb{N}$ (number of $c$-components),
$\alpha \in \mathbb{N}^{n \times (6-2c)}$ (neighborhood relation of $c$-cells)}
$n \leftarrow 0$\;
Stack $s \leftarrow \emptyset$\;
\ForEach{c-cell $t \in T$}{
	\If{$\tau(t) = 0$}{
		$p \leftarrow (6-2c)$\;
		$(t_1, \ldots, t_p) \leftarrow \Gamma(t)$\;
		$(x_1, \ldots, x_p) \leftarrow (\tau(t_1), \ldots, \tau(t_p))$\;
		$(y_1, \ldots, y_p) \leftarrow\ $once$(x_1, \ldots, x_p)$\;
		\If{$y_1 \not= 0$}{
			$n \leftarrow n+1$\;
			\For{$j=1$ \KwTo $p$}{
				$\alpha(n,j) \leftarrow y_j$\;
			}
			$s$.push($t$)\;
			\While{$s \not= \emptyset$}{
				$u \leftarrow s$.pop()\;
				$\tau(u) \leftarrow n$\;
				\ForEach{$v \sim u$}{
					\If{$\tau(v) = 0$}{
						$(v_1, \ldots, v_p) \leftarrow \Gamma(v)$\;
						$(x'_1, \ldots, x'_p) \leftarrow (\tau(v_1), \ldots, \tau(v_p))$\;
						$(y'_1, \ldots, y'_p) \leftarrow\ $once$(x'_1, \ldots, x'_p)$\;
						\If{$(y'_1, \ldots, y'_p) = (y_1, \ldots, y_p)$}{
							$s$.push($v$)\;
						}
					}
				}
			}
		}
	}
}
\end{algorithm}

\begin{algorithm}
\label{algorithm:0-cell-labeling}
\caption{Labeling of active 0-cells}
\KwIn{$\tau: T \rightarrow \mathbb{N}$ (topological label map, preliminary)}
\KwOut{$\tau$ (modified), $n \in \mathbb{N}$ (number of active 0-cells),
$\alpha \in \mathbb{N}^{n \times 6}$ (neighborhood relation of 0-cells)}
$n \leftarrow 0$\;
\ForEach{0-cell $t \in T$}{
	$(t_1, \ldots, t_6) \leftarrow \Gamma(t)$\;
	$(x_1, \ldots, x_6) \leftarrow (\tau(t_1), \ldots, \tau(t_6))$\;
	$(y_1, \ldots, y_6) \leftarrow\ $once$(x_1, \ldots, x_6)$\;
	\If{$y_1 \not= 0$}{
		$n \leftarrow n+1$\;
		\For{$j=1$ \KwTo $6$}{
			$\alpha(n,j) \leftarrow y_j$\;
		}
		$\tau(u) \leftarrow n$\;
	}
}
\end{algorithm}

\section{Block-wise Processing of Large Segmentations}
\label{section:geo-block-wise}
In order to extract the geometry and topology from large volume segmentations 
efficiently with limited RAM, the labeling of components is constrained to
sufficiently small blocks of the topological grid. Each block is processed
independently using the algorithms \ref{algorithm:3-cell-labeling},
\ref{algorithm:21-cell-labeling} and \ref{algorithm:0-cell-labeling}. The
independent results are stored on the hard drive and subsequently combined into
a consistent labeling of the entire topological grid. More precisely, the 
procedure works as follows.

\begin{figure}[t]
a) \includegraphics[width=0.45\columnwidth]{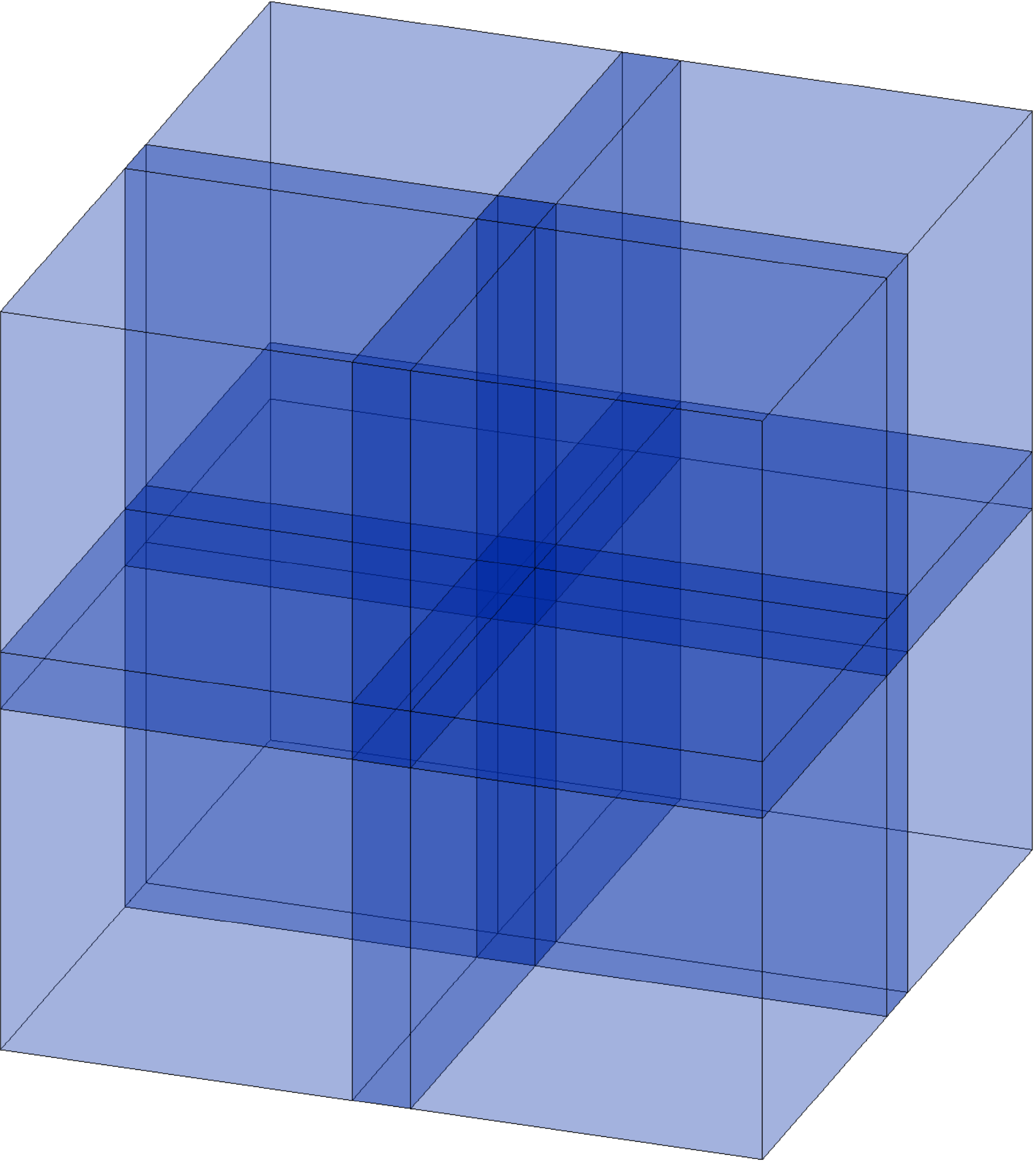}
b) \includegraphics[width=0.45\columnwidth]{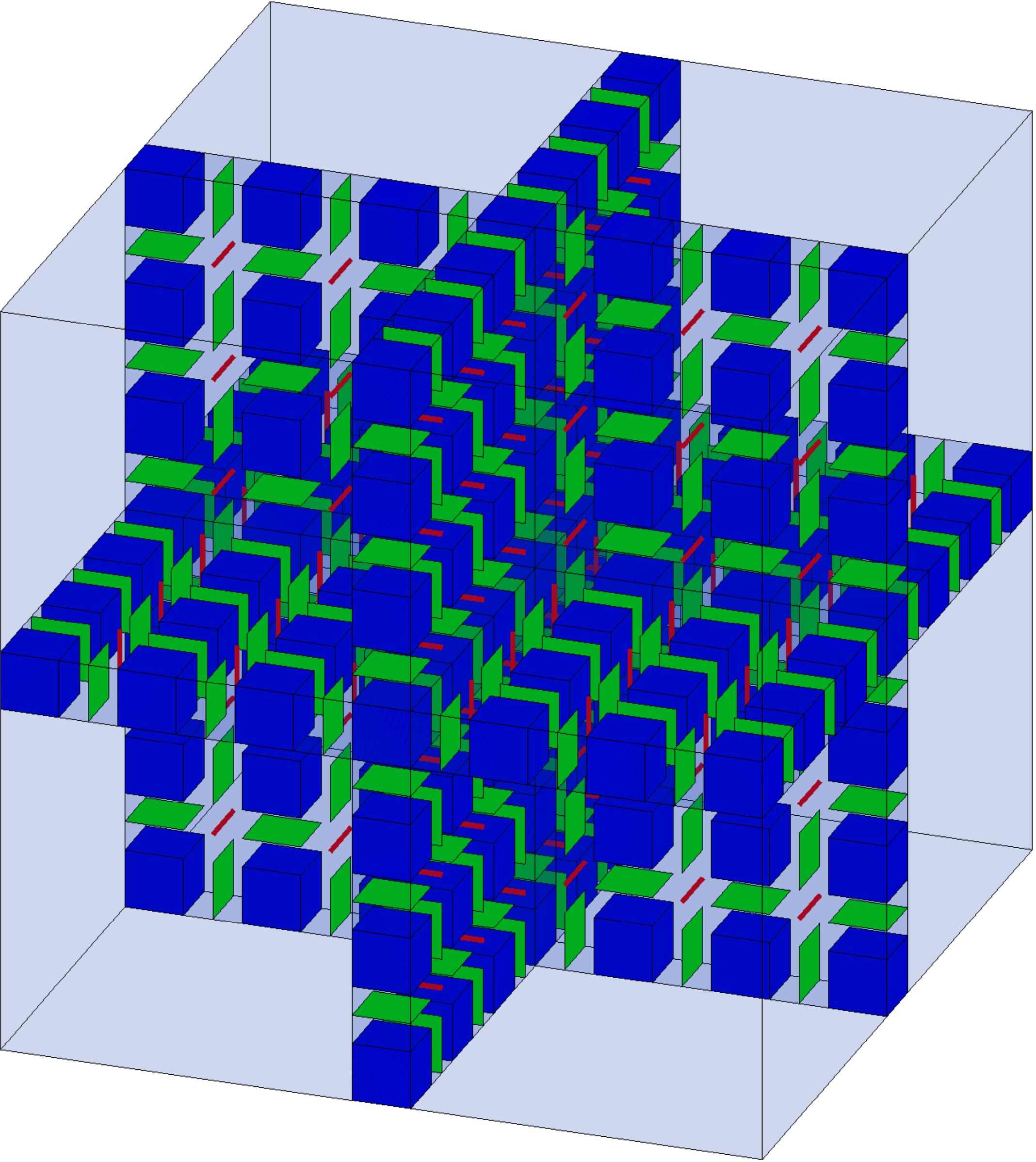}
\caption{The topological grid is subdivided into blocks, leaving an overlap of 
one cell in each direction (a). Cells inside regions of overlap (b) are 
assigned two different labels during the independent processing of the blocks.
These labels are subsequently reconciled.}
\label{figure:geo-blockwise-processing}
\end{figure}

\emph{Step 1 (connected component labeling)}.
The topological grid is subdivided into blocks such that each block begins and 
ends in each direction with a layer that contains 3-cells. Moreover, adjacent 
blocks are chosen to overlap each other by one cell in each direction as is 
depicted in Fig.~\ref{figure:geo-blockwise-processing}. In consequence, each 
1-cell and each 2-cell within a region of overlap belongs to two different
blocks, cf.~Fig.~\ref{figure:geo-blockwise-processing}b. Each block is then 
labeled independently using the algorithms \ref{algorithm:3-cell-labeling}, 
\ref{algorithm:21-cell-labeling}, and \ref{algorithm:0-cell-labeling}, and the 
respective labelings are stored on the hard drive. In consequence, the labeling 
of connected components starts in each block with the label 1.

\emph{Step 2 (label disambiguation)}.
The processed blocks are put in an arbitrary but fixed order. If the $j$-th 
block in this order contains $m_2$ 2-components, the \emph{offset} $m_2$ is 
stored along with block $j+1$ where it is added on demand to all non-zero 
2-cell labels, similarly for 1-cells and 0-cells, arriving at maximal labels 
$M_0, M_1, M_2$ of 0-, 1-, and 2-cells, respectively, for the entire volume.

\emph{Step 3 (label reconciliation)}. 
Whenever connected components of cells extend across block boundaries, their 
labels in the respective blocks (with offsets added) need to be reconciled. Two
disjoint set data structures equipped with the operations \emph{union} and 
\emph{find} \cite{cormen-2001} are used for this purpose, one for 1-cells and 
one for 2-cells, the former with $M_1$, the latter with $M_2$ initially 
distinct sets, each set containing one label. First, union$(l_1, l_2)$ is 
called for the pair $(l_1, l_2)$ of distinct labels assigned to any active 
1-cells and 2-cells within a region of overlap. Second, each label $l$ is 
replaced by the representative find$(l)$ of the union to which it belongs.

\emph{Step 4 (curve merging)}. 
As is elucidated in the correctness analysis of this algorithm in Section 
\ref{section:geo-correctness}, 1-components can still be falsely split and 
0-cells falsely labeled as active at this stage. Thus, in a last step, each 
0-cell $t_0$ and any pair $(t_1, t'_1)$ of 1-components bounded by $t_0$ is 
considered. The labels of $t_1$ and $t'_1$ are reconciled if $t_1$ and $t'_1$ 
bound the same connected components of 2-cells. If at least one reconciliation
has taken place, the activity of $t_0$ is re-computed.
\subsection{Correctness of the Algorithm}
\label{section:geo-correctness}
In order to prove that the block-wise processing is correct, the segment label 
map $\sigma$ as well as the decomposition of the topological grid into blocks 
are assumed to be arbitrary but fixed. The labeling $\tau'$ output by the 
block-wise method is compared to the labeling $\tau$ obtained from the 
application of Algorithms \ref{algorithm:3-cell-labeling}, 
\ref{algorithm:21-cell-labeling} and \ref{algorithm:0-cell-labeling} to the 
entire segment label map. While the latter is known to be correct, the former 
is correct if the two labelings are isomorphic:
\begin{definition}
Two labelings $\tau, \tau': T \rightarrow \mathbb{N}$ of the topological grid 
$T$ are isomorphic w.r.t. a subset $U \subseteq T$ if and only if the 
following conditions hold:
\begin{eqnarray}
& & \forall u \in U: \quad \tau(u) = 0
	\Leftrightarrow \tau'(u) = 0 \enspace , \label{A}\\
& & \forall u, v \in U: \quad \tau(u) = \tau(v)
	\Leftrightarrow \tau'(u) = \tau'(v) \enspace .\label{B}
\end{eqnarray}
If $\tau$ and $\tau'$ are isomorphic w.r.t. the entire domain $T$, they are 
simply called isomorphic.
\end{definition}

\begin{proposition}
$\tau$ and $\tau'$ are isomorphic w.r.t. all 3-cells of $T$.
\end{proposition}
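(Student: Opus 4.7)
The plan is to argue that the restriction to 3-cells is essentially trivial once one tracks how 3-cell labels are produced and never subsequently touched. I would begin by noting that in both constructions the 3-cell labels come solely from Algorithm~\ref{algorithm:3-cell-labeling}, which for every voxel $r \in G$ sets the value at the 3-cell $2r-1$ to $\sigma(r)$. In the global construction this happens once for all of $G$; in the block-wise construction it happens independently for each block, but since adjacent blocks overlap only in cells that are not 3-cells (the blocks begin and end with 3-cell layers), every 3-cell lies in exactly one block and receives the label $\sigma(r)$ from that block's copy of Algorithm~\ref{algorithm:3-cell-labeling}.

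Next I would verify that none of Steps~2--4 of the block-wise procedure modify any 3-cell value. Step~2 adds offsets $M_0, M_1, M_2$ only to non-zero 0-, 1-, and 2-cell labels; Step~3 calls union/find only on 1-cell and 2-cell labels inside regions of overlap; Step~4 rewrites 1-cell labels and may flip the activity of 0-cells. In particular, $\tau'(t) = \sigma(r)$ for every 3-cell $t = 2r-1$, just as $\tau(t) = \sigma(r)$.

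Finally I would conclude by checking the two isomorphism conditions directly. Since 3-cells are declared active unconditionally (Def.~\ref{definition:connected-components}) and $\sigma$ maps into $\mathbb{N}$, every 3-cell has the same nonzero label under both $\tau$ and $\tau'$, so condition~(\ref{A}) holds vacuously on the 3-cell subset. For condition~(\ref{B}), for any two 3-cells $u = 2r-1$ and $v = 2r'-1$ we have
\begin{equation}
\tau(u) = \tau(v) \iff \sigma(r) = \sigma(r') \iff \tau'(u) = \tau'(v) \enspace ,
\end{equation}
which finishes the proof.

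There is no real obstacle here; the point of the proposition is to ``initialize'' the inductive correctness argument that subsequent propositions will have to carry out for 2-cells, 1-cells, and 0-cells (where Steps~2--4 actually do something). The only thing worth stating carefully is the block layout claim that each 3-cell lives in a unique block, which follows from the convention that blocks start and end with 3-cell layers with a one-cell overlap on non-3-cell layers.
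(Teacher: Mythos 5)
Your core argument is the same as the paper's one\mbox{-}line proof: in both the global and the block-wise construction, 3-cell labels are copied directly from the segment label map $\sigma$ by Algorithm~\ref{algorithm:3-cell-labeling} and are never subsequently modified (Steps~2--4 add offsets and reconcile labels only for 0-, 1-, and 2-cells), so $\tau$ and $\tau'$ agree on all 3-cells and are in particular isomorphic there. The explicit check of conditions (\ref{A}) and (\ref{B}) is fine.

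However, one supporting claim you rely on is false: you assert that adjacent blocks overlap only in cells that are not 3-cells, so that every 3-cell lies in exactly one block. In the paper's decomposition each block begins and ends, in every direction, with a layer that \emph{contains 3-cells}, and adjacent blocks overlap by exactly one cell in each direction; the shared layer is therefore itself a layer containing 3-cells, and those 3-cells belong to two blocks. (The worked example makes this concrete: blocks of $2 \times 2 \times 2$ voxels on a $3 \times 3 \times 2$ grid share an entire voxel layer, hence its 3-cells.) The regions of overlap contain 3-, 2-, and 1-cells; the paper singles out 1- and 2-cells only because those are the ones whose block-local labels need reconciliation. Your conclusion survives because the flaw is harmless: a 3-cell $2r-1$ shared by two blocks receives the same value $\sigma(r)$ from each block's run of Algorithm~\ref{algorithm:3-cell-labeling}, and no offsets or unions are ever applied to 3-cell labels, so $\tau'$ is still well defined and equal to $\sigma$ on 3-cells. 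You should replace the uniqueness-of-block claim with this consistency argument rather than lean on a property of the block layout that does not hold.
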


\begin{proof}
3-cell labels are copied from the segment label map $\sigma$ to
$\tau$ and $\tau'$, respectively by both algorithms. The labelings
$\tau$ and $\tau'$ are therefore identical and thus isomorphic
w.r.t. all 3-cells of $T$.
\end{proof}

\begin{proposition}
$\tau$ and $\tau'$ are isomorphic w.r.t. all 2-cells of $T$.
\end{proposition}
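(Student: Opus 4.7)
The plan is to verify conditions (\ref{A}) and (\ref{B}) of the isomorphism restricted to 2-cells. Two facts from earlier results will do most of the work: (i) each block begins and ends with a 3-cell layer, so every 2-cell in a block has both of its 3-cell $\Gamma$-neighbours in the same block, and (ii) $\tau$ and $\tau'$ already coincide on all 3-cells by the previous proposition.

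Condition (\ref{A}) reduces immediately to a local check: the ``once'' sequence that Algorithm~\ref{algorithm:21-cell-labeling} evaluates at a 2-cell depends only on its two neighbouring 3-cell labels, which by (i) and (ii) are identical in the global computation and in every block-wise computation. A 2-cell is therefore declared active by $\tau$ if and only if it is declared active by $\tau'$.

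For condition (\ref{B}), I would first note that the correctness of Algorithm~\ref{algorithm:21-cell-labeling} applied to a single block implies that every local 2-component found inside a block is contained in a single global 2-component of $\tau$, because its cells are $\leftrightarrow$-connected in $T$ and share one ``once'' sequence, and by (ii) this sequence identifies exactly the two bounded 3-components. The forward implication follows: each union invoked in Step~3 merges two local labels assigned to one overlap 2-cell, so both labels refer to subsets of the unique global 2-component containing that overlap cell, and chaining the unions gives $\tau'(u)=\tau'(v)\Rightarrow\tau(u)=\tau(v)$. For the converse, I would fix a $\leftrightarrow$-path $u=t_1,\ldots,t_n=v$ inside a global 2-component, cut it into maximal sub-paths lying in single blocks, observe that each sub-path is recognised as one local component by Algorithm~\ref{algorithm:21-cell-labeling}, and use that consecutive sub-paths meet in a cell of the overlap region, so their labels are identified by a union in Step~3. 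Transitivity of the disjoint-set structure then gives $\tau'(u)=\tau'(v)$.

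The hard part is the geometric claim that the path cannot ``jump over'' the shared 3-cell layer between two adjacent blocks, so that consecutive sub-paths really do meet in the overlap. This reduces to a parity case analysis on the $\leftrightarrow$-relation: a $\leftrightarrow$-step between two 2-cells changes each coordinate by at most $2$, and a change of exactly $2$ in a direction $d$ must be witnessed by a 1-cell whose $d$-coordinate equals the odd value being spanned. Consequently a direct step from $t_d=2k-2$ to $t_d=2k$ is impossible when the overlap layer is $t_d=2k-1$, and the path is forced to visit an overlap cell at every block transition. Once this lemma is in place the rest of the proof assembles mechanically from the correctness of the per-block labeling and the transitivity of the disjoint-set structure used in Step~3.
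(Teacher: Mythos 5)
Your proof is correct, and its overall skeleton matches the paper's: condition (\ref{A}) is settled locally because a 2-cell's activity depends only on its two 3-cell $\Gamma$-neighbours, which lie in the same block and carry identical labels in $\tau$ and $\tau'$; condition (\ref{B}) is split into the two implications, with a path-cutting argument for the direction $\tau(u)=\tau(v)\Rightarrow\tau'(u)=\tau'(v)$. Where you genuinely differ is in the two places the paper is terse. For the direction $\tau'(u)=\tau'(v)\Rightarrow\tau(u)=\tau(v)$, the paper argues that 2-cells with $\tau(u)\neq\tau(v)$ ``bound different pairs of segments'' and hence are never reconciled; read literally this misses the case of two disconnected faces separating the same pair of segments, whereas your argument---every local component lies inside a single global 2-component, every union in Step~3 joins two labels of one shared overlap cell, so each union class stays inside one global component---handles that case cleanly and is the more robust formulation. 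For the converse direction, the paper simply asserts that when the path crosses block boundaries ``the labels along the path are reconciled''; you make explicit the geometric fact this relies on, namely that a $\leftrightarrow$-step between 2-cells cannot jump the shared odd layer, so the path is forced through an overlap cell where a union is actually invoked. Your justification of that lemma is slightly compressed (the putative witness of a $\pm 2$ jump in direction $d$ would in fact have three odd coordinates, i.e.\ be a 3-cell with empty $\Gamma$, so no witnessing 1-cell exists at all), but the conclusion and its use are correct. In short: same strategy as the paper, with the two gaps in the paper's own write-up filled in.
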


\begin{proof}
During block-wise processing, the decision whether or not a 2-cell
obtains a non-zero label depends exclusively on the labeling of
3-cells w.r.t. which $\tau$ and $\tau'$ are identical. Thus, (\ref{A})
holds for all 2-cells of $T$.

Let $u, v \in T$ be any 2-cells.
If $\tau(u) \not= \tau(v)$, $u$ and $v$ bound different pairs of
segments and hence obtain different labels during block-wise
processing. Such labels are not reconciled. Thus,
$\tau'(u) \not= \tau'(v)$.

If $\tau(u) = \tau(v)$, there exists a path of 2-cells between $u$
and $v$ on which all cells separate the same pair of segments. 
If this path is contained in one single block, all its 2-cells obtain 
the same label during the independent processing of that block. 
If the path crosses the boundaries of blocks, the labels along the 
path are reconciled. Thus, $\tau'(u) = \tau'(v)$.

Hence, (\ref{B}) hold for all pairs of 2-cells. In conclusion, $\tau$
and $\tau'$ are isomorphic w.r.t. all 2-cells of $T$.
\end{proof}

\begin{proposition}
\label{proposition:1-cells}
For each block $U \subseteq T$, any 1-cell $t_1 \in U$, and any
2-cell $t_2 \in \Gamma(t_1)$, the label assigned to $t_2$ is unique
among the labels assigned to all 2-cells in $\Gamma(t_1)$ before
label reconciliation if and only if it is unique afterwards.
\end{proposition}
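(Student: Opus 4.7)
The plan is to argue each direction by contradiction, using the preceding proposition (isomorphism of $\tau$ and $\tau'$ w.r.t.\ all $2$-cells) to translate questions about post-reconciliation labels into questions about the reference labeling $\tau$, and using Algorithm~\ref{algorithm:21-cell-labeling} to characterize the pre-reconciliation labels produced inside block $U$.

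Two structural observations will do most of the work and I would establish them first. First, any two $2$-cells $t_2, s \in \Gamma(t_1)$ share $t_1$ as a common $\Gamma$-parent, so $t_2 \leftrightarrow s$ directly, in a single step. Second, the block-overlap convention of Section~\ref{section:geo-block-wise} (each block begins and ends in each direction with a $3$-cell layer, and adjacent blocks overlap by one cell) forces $\Gamma(t_1) \subseteq U$ whenever $t_1 \in U$, because the four $\Gamma$-neighbors of a $1$-cell are obtained by shifting one of its two even coordinates by $\pm 1$. In particular, every $2$-cell in $\Gamma(t_1)$ is labeled during the independent processing of block $U$.

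With these in hand, the forward direction would run as follows: if $\tau'(t_2) = \tau'(s)$ after reconciliation for some $s \in \Gamma(t_1) \setminus \{t_2\}$, then by the preceding proposition $\tau(t_2) = \tau(s)$, so $t_2$ and $s$ separate the same pair of segments and therefore carry identical \emph{once}-signatures; combined with $t_2 \leftrightarrow s$ and both cells lying in $U$, the depth-first search of Algorithm~\ref{algorithm:21-cell-labeling} inside $U$ reaches one from the other in a single push and assigns them the same label, contradicting uniqueness before reconciliation. For the reverse direction, if Algorithm~\ref{algorithm:21-cell-labeling} already assigns $t_2$ and $s$ the same label in $U$, its DFS must have connected them along a $\leftrightarrow$-chain of matching signature, so $t_2$ and $s$ separate the same pair of segments, whence $\tau(t_2) = \tau(s)$ and, by the preceding proposition, $\tau'(t_2) = \tau'(s)$, contradicting uniqueness afterwards. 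I expect the only non-routine point to be the second structural observation, which is purely combinatorial but requires the block-overlap convention to be unpacked carefully; once it is in place, no analysis of the union-find reconciliation itself is needed and the entire argument stays inside the single block $U$.
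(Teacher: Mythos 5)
Your proof is correct and follows essentially the same route as the paper's: the forward direction uses the isomorphism of $\tau$ and $\tau'$ on the $2$-cells to conclude that $t_2$ and the colliding cell separate the same pair of segments, notes that they are $\leftrightarrow$-related via their common parent $t_1$, and contradicts pre-reconciliation uniqueness, while the reverse direction --- which the paper dismisses as trivial (reconciliation only merges labels, so equal labels stay equal) --- you route through $\tau$ and back, which also works but is more roundabout. Your explicit observation that $\Gamma(t_1) \subseteq U$, so both $2$-cells are labeled during the processing of the same block, is a correct and worthwhile unpacking of a step the paper leaves implicit in the phrase ``would have obtained the same label before label reconciliation.''
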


\begin{proof}
($\Rightarrow$) Suppose $t_2$ had a unique label among the elements
of $\Gamma(t_1)$ before label reconciliation and the same label as
another $t'_2 \in \Gamma(t_1)$ afterwards, i.e.~in $\tau'$. Then,
$t_2$ and $t'_2$ separated the same pair of segments because $\tau'$
is isomorphic to the correct labeling $\tau$ w.r.t the 2-cells.
Moreover, we know by definition that $t_2 \leftrightarrow t'_2$,
so $t_2$ and $t'_2$ would have obtained the same label before label
reconciliation. A contradiction. ($\Leftarrow$) Trivial.
\end{proof}

\begin{proposition}
\label{proposition:1-cell-activity}
For all 1-cells $t_1 \in T$ holds
$\tau(t_1) = 0 \Leftrightarrow \tau'(t_1) = 0$.
\end{proposition}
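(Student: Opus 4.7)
The plan is to show that the activity of $t_1$ in $\tau'$ is determined entirely by the block-wise labeling of step~1, and that this local decision coincides with the global one made by $\tau$. The overall strategy is to chain the preceding 2-cell isomorphism with Proposition~\ref{proposition:1-cells} and the defining criterion of $\theta$.

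First, I would observe that neither step~3 nor step~4 can flip the zero/nonzero status of a 1-cell label: step~3 reconciles 1-cell labels via union--find and thus only identifies nonzero labels with one another, and step~4 reconciles 1-cell labels while recomputing only \emph{0-cell} activity. Hence $\tau'(t_1)\neq 0$ iff at least one block containing $t_1$ activated it during step~1, and one only needs to compare that step-1 decision to $\tau$.

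Second, I would establish a spatial containment lemma: every block $B$ that contains $t_1$ also contains all of $\Gamma(t_1)$. This rests on the block layout---boundaries lie on 3-cell (all-odd) layers and adjacent blocks overlap by one cell---combined with the observation that the four 2-cell $\Gamma$-neighbors of $t_1$ share its single odd coordinate and differ from it by $\pm 1$ in each of the two even ones. A short case distinction on whether that odd coordinate coincides with a block boundary confirms the claim. Given $\Gamma(t_1)\subseteq B$, Algorithm~\ref{algorithm:21-cell-labeling} (with $c=1$) assigns $t_1$ a nonzero label in $B$ iff the pre-reconciliation block labels on $\Gamma(t_1)$ contain at least one unique nonzero value.

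With this in hand I would chain three equivalences: (i) by Proposition~\ref{proposition:1-cells}, uniqueness of a 2-cell label among $\Gamma(t_1)$ survives step-3 reconciliation, making the condition equivalent to the same statement for $\tau'$; (ii) by the preceding 2-cell isomorphism proposition, uniqueness in $\tau'$ on $\Gamma(t_1)$ is equivalent to uniqueness in $\tau$; (iii) by Definition~\ref{definition:connected-components}, the existence of a 2-cell neighbor with a unique $\tau$-label is precisely $\theta(t_1)\neq\emptyset$, i.e.\ $\tau(t_1)\neq 0$. Together, these yield the desired biconditional. The hard part will be the spatial containment step $\Gamma(t_1)\subseteq B$, which depends on the precise block decomposition; once it is in place, the proof reduces to a clean chaining of Proposition~\ref{proposition:1-cells}, the 2-cell isomorphism, and the activity criterion from Definition~\ref{definition:connected-components}.
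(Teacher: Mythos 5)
Your proposal is correct and follows essentially the same route as the paper: the activity of a 1-cell is decided block-locally by whether some 2-cell label is unique on $\Gamma(t_1)$, this criterion is invariant under reconciliation by Proposition~\ref{proposition:1-cells}, and the 2-cell isomorphism then transfers it to $\tau$. You additionally make explicit two points the paper leaves implicit --- that Steps~3 and~4 cannot flip a 1-cell between zero and nonzero, and that $\Gamma(t_1)$ is contained in every block containing $t_1$ (which holds because block boundaries lie on odd-coordinate layers while the $\Gamma$-neighbors of a 1-cell perturb only its even coordinates) --- both of which are correct and strengthen the argument.
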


\begin{proof}
During the independent processing of each block, any 1-cell
$t_1 \in T$ obtains a non-zero label if and only if at least one
2-cell label is unique among the labels assigned to all 2-cells of
$\Gamma(t_1)$. In this step, the 2-cell labels before label
reconciliation are considered. However, by Prop.
\ref{proposition:1-cells}, this is no different than considering the
2-cell labels of $\tau'$. Moreover, $\tau'$ and $\tau$ are isomorphic
w.r.t. all 2-cell and thus, 1-cells obtain a non-zero label in $\tau'$
precisely if they are labeled non-zero in $\tau$, i.e.
$\tau(t_1) = 0 \Leftrightarrow \tau'(t_1) = 0$.
\end{proof}
\begin{proposition}
\label{proposition:1-cell-implication}
For all 1-cells $u, v \in T$ holds
$\tau(u) = \tau(v) \Leftarrow \tau'(u) = \tau'(v)$.
\end{proposition}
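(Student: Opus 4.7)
My plan is to split the statement according to whether the common $\tau'$-label is zero, and, in the non-zero case, to reduce the claim to showing that each elementary identification performed by the block-wise algorithm links cells that already lie in the same 1-component under $\tau$. The zero case is immediate from Prop.~\ref{proposition:1-cell-activity}, so the bulk of the work concerns pairs $u,v$ with $\tau'(u)=\tau'(v)\neq 0$.

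For the non-zero case I would first note that two 1-cells receive the same $\tau'$-label precisely if they are connected by a finite chain of elementary merges of three types: (a) the within-block depth-first propagation in Algorithm~\ref{algorithm:21-cell-labeling}, which identifies two $\leftrightarrow$-connected 1-cells sharing the same \emph{once}-signature on their 2-cell $\Gamma$-neighbors; (b) the overlap reconciliation in Step~3, which merges the two labels of a single physical 1-cell lying in the overlap of two blocks; and (c) the curve-merging in Step~4, which fuses two 1-components bounded by a common 0-cell whenever they bound the same connected components of 2-cells. It therefore suffices to check that each individual merge is compatible with $\tau$, after which the conclusion follows by induction along the chain. For (a), the fact that $\tau$ and $\tau'$ are already isomorphic with respect to 2-cells (previous proposition, combined with Prop.~\ref{proposition:1-cells}) means that the \emph{once}-signature of a 1-cell is the same whether read off from $\tau'$ or from $\tau$; hence the two merged 1-cells bound the same 2-components and are $\leftrightarrow$-connected, so conditions (ii) and (iii) of Def.~\ref{definition:connected-components} place them in one 1-component of $\tau$. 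Case (b) is trivial. For (c), each of the 1-components $C_1,C_2$ bounded by the 0-cell $t_0$ contains, by definition of $\theta$, a 1-cell in $\Gamma(t_0)$; these two 1-cells are $\leftrightarrow$-connected through $t_0$ and bound the same 2-components (translated from $\tau'$ to $\tau$ via the 2-cell isomorphism), so they lie in a single 1-component of $\tau$, and by the inductive hypothesis the same holds of $C_1\cup C_2$.

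The main obstacle I expect is case (c). Its criterion is phrased in terms of the intermediate $\tau'$-labels of 2-components, whereas what has to be established is a statement about $\tau$; Prop.~\ref{proposition:1-cells} is what legitimises this translation, and one must also verify carefully that the notion of ``bounded by $t_0$'' used in Step~4 does furnish witness 1-cells in $\Gamma(t_0)$ so that clause (iii) of Def.~\ref{definition:connected-components} can be invoked. Cases (a) and (b) are essentially bookkeeping once the 2-cell isomorphism has been brought to bear. Note that this is the easier of the two directions of isomorphism on 1-cells — false \emph{merges} are what must be ruled out here, while false \emph{splits} (the converse direction) are precisely what Step~4 exists to repair, and their treatment will require a separate argument.
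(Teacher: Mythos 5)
Your proof is correct and follows essentially the same route as the paper's: the zero case is dispatched by Prop.~\ref{proposition:1-cell-activity}, and in the non-zero case the equality of $\tau'$-labels is traced back to a path/chain of identifications which, via the 2-cell isomorphism and Prop.~\ref{proposition:1-cells}, links 1-cells bounding the same connected components of 2-cells in $\tau$. Your explicit case analysis of the three merge types --- in particular the Step-4 merges, which the paper subsumes under the phrase ``by construction of $\tau'$'' --- is simply a more detailed rendering of the same argument.
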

\begin{proof}
If $\tau'(u) = 0$, the conjecture holds by virtue of 
Prop.~\ref{proposition:1-cell-activity}. If $\tau'(u) \not= 0$, it 
follows that $\tau'(v) \not= 0$ (by assertion) as well as $\tau(u) \not= 0$
and $\tau(v) \not= 0$ (by Prop. \ref{proposition:1-cell-activity}).
Moreover, $\tau'(u) = \tau'(v)$ requires by construction of $\tau'$
that $u$ and $v$ are connected by a path of 1-cells all of which bound
the same connected components of 2-cells (in $\tau'$). As $\tau$ and
$\tau'$ are isomorphic w.r.t.~the 2-cells and as 1-cells are labeled
correctly in $\tau$, it follows that $\tau(u) = \tau(v)$.
\end{proof}
\begin{proposition}
\label{proposition:0-cells}
For all 0-cells $t \in T$ holds
$\tau(t) \not= 0 \Rightarrow \tau'(t) \not= 0$.
\end{proposition}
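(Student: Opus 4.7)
The plan is to unpack the hypothesis via Definition~\ref{definition:connected-components}. Writing $\Gamma(t) = \{t_1,\ldots,t_6\}$ for the six 1-cell neighbors of $t$, the assumption $\tau(t)\neq 0$ means that among $\tau(t_1),\ldots,\tau(t_6)$ at least one value is non-zero and occurs exactly once. After reindexing I would assume $\tau(t_1)=l\neq 0$ and $\tau(t_j)\neq l$ for $j\in\{2,\ldots,6\}$. The goal then reduces to showing that $\tau'(t_1)$ is itself a non-zero label appearing exactly once in $(\tau'(t_1),\ldots,\tau'(t_6))$, since this is precisely the criterion that causes the block-wise procedure (and step~4's re-computation) to mark $t$ as active.

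For the non-zero part I would apply Prop.~\ref{proposition:1-cell-activity} to $t_1$. For the uniqueness part I would fix $j\in\{2,\ldots,6\}$ and split into two cases: if $\tau(t_j)=0$, Prop.~\ref{proposition:1-cell-activity} gives $\tau'(t_j)=0\neq\tau'(t_1)$; otherwise $\tau(t_j)\neq l=\tau(t_1)$ and the contrapositive of Prop.~\ref{proposition:1-cell-implication} (\emph{i.e.}~$\tau(u)\neq\tau(v)\Rightarrow\tau'(u)\neq\tau'(v)$) yields $\tau'(t_j)\neq\tau'(t_1)$. Combining the two cases, $\tau'(t_1)$ is a non-zero label that appears exactly once in $(\tau'(t_1),\ldots,\tau'(t_6))$.

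The most delicate step will be turning this uniqueness into $\tau'(t)\neq 0$ in the \emph{final} labeling, where step~4 has had a chance to re-compute activity at $t$. My plan is first to observe that the per-block labels produced by Algorithm~\ref{algorithm:0-cell-labeling} are strictly finer than the $\tau'$ labels, because steps~3 and~4 perform only union-find unions; hence the per-block label of $t_1$ is already distinct from those of $t_2,\ldots,t_6$, so Algorithm~\ref{algorithm:0-cell-labeling} assigns $t$ a non-zero label within its block. The worry is that step~4 might then wipe this out by collapsing $\tau'(t_1)$ together with some $\tau'(t_j)$. To rule this out I would chase through the union-find chains: each merge performed by step~4 identifies two 1-cell labels whose cells share a 0-cell and bound the same connected components of 2-cells, and the 2-cell isomorphism of $\tau$ and $\tau'$ shown earlier forces those two cells to lie in the same global 1-component of $\tau$. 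Iterating along any chain of such unions, $\tau'(t_1)$ can never be collapsed with $\tau'(t_j)$ for $j\geq 2$ without contradicting $\tau(t_1)\neq\tau(t_j)$. Therefore $\tau'(t_1)$ stays unique in $(\tau'(t_1),\ldots,\tau'(t_6))$ after step~4, the re-computation (if triggered) still marks $t$ active, and $\tau'(t)\neq 0$ follows.
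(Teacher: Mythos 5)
Your argument is correct and matches the paper's own proof: both unpack the activity of $t$ into the existence of a 1-cell $u \in \Gamma(t)$ whose label is non-zero and occurs exactly once, then transfer non-zeroness via Prop.~\ref{proposition:1-cell-activity} and uniqueness via the contrapositive of Prop.~\ref{proposition:1-cell-implication}, concluding by the construction of $\tau'$. Your extra concern that Step~4 might collapse $\tau'(t_1)$ with some $\tau'(t_j)$ and deactivate $t$ is legitimate but is already absorbed by Prop.~\ref{proposition:1-cell-implication}, which is a statement about the final labeling $\tau'$ (i.e.~after Step~4), so the recomputed activity test at $t$ still succeeds without the additional union-find chase.
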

\begin{proof}
If $\tau(t) \not= 0$, at least one 1-cell label is non-zero and unique
among the labels assigned to all 1-cells in $\Gamma(t)$, i.e.
\[\exists u \in \Gamma(t): \quad \tau(u) \not=0 \wedge \forall v \in \Gamma(t) \setminus \{u\}: \tau(u) \not= \tau(v) \enspace .\]
For any such $u$ follows by Prop.~\ref{proposition:1-cell-activity} and \ref{proposition:1-cell-implication}
\[\tau'(u) \not= 0 \wedge \forall v \in \Gamma(t) \setminus \{u\}: \tau'(u) \not= \tau'(v)\]
and thus, by construction of $\tau'$, the conjecture.
\end{proof}

In order to prove that $\tau$ and $\tau'$ are isomorphic, it
remains to be shown that the inverse implications of 
Prop.~\ref{proposition:1-cell-implication} and \ref{proposition:0-cells}
also hold. Unlike the above propositions which hold by construction of
$\tau'$ in Steps 1--3 of the block algorithm, the two missing
implications are enforced explicitly, by Step 4.

As the following example shows, 1-components can indeed be falsely split and 
0-cells falsely labeled active if Step 4 is omitted. In 
Fig.~\ref{figure:geo-example1}a, a segment label map on a grid of $3 \times 3 
\times 2$ voxels is shown. Six segments are identified by the integers $1$ 
through $6$. The correct corresponding topological label map is depicted in 
Fig.~\ref{figure:geo-example1}b, and the connected components are plotted in 
Fig.~\ref{figure:geo-example2}a and \ref{figure:geo-example2}b. The 1-cell 
labels in Fig.~\ref{figure:geo-example1}b are colored in accordance with the 
graphical visualization in Fig.~\ref{figure:geo-example2}b.

Assume that the segment label map in Fig. \ref{figure:geo-example1}a is 
processed block-wise, with blocks of $2 \times 2 \times 2$ voxels. Note that 
this block-size includes an overlap of one voxel in each direction. The bold 
font in Fig. \ref{figure:geo-example1}a indicates one of these blocks. The 
topological label map that is constructed when this block is processed 
independently is depicted in Fig.~\ref{figure:geo-example1}c. While the 1-cells
labeled 1 and 2 are merged into one connected component during label 
reconciliation after all blocks have been processed, the 1-cells labeled 3 and 
4 are merged only in Step 4 of the algorithm. If Step 4 were omitted, the 
incorrect labeling shown in Fig. \ref{figure:geo-example2}c would be computed.

\begin{SCfigure}[1][t]
\begin{minipage}[t]{7cm}
a) 
\begin{tabular}[t]{|l@{\hspace{2.5mm}}l@{\hspace{2.5mm}}l|}
\hline
\multicolumn{3}{|c|}{z = 1}\\
\hline
\hline
1 & 1 & 1\\
1 & \bf{2} & \bf{1}\\
1 & \bf{1} & \bf{3}\\
\hline
\end{tabular}
\begin{tabular}[t]{|l@{\hspace{2.5mm}}l@{\hspace{2.5mm}}l|}
\hline
\multicolumn{3}{|c|}{z = 2}\\
\hline
\hline
4 & 4 & 4\\
4 & \bf{5} & \bf{4}\\
4 & \bf{4} & \bf{6}\\
\hline
\end{tabular}\\
b)
\begin{tabular}[t]{|l@{\hspace{2.5mm}}l@{\hspace{2.5mm}}l@{\hspace{2.5mm}}l@{\hspace{2.5mm}}l|}
\hline
\multicolumn{5}{|c|}{z = 1}\\
\hline
\hline
1 & 0 & 1 & 0 & 1\\
0 & 0 & 1 & 0 & 0\\
1 & 1 & 2 & 1 & 1\\
0 & 0 & 1 & 0 & 2\\
1 & 0 & 1 & 2 & 3\\
\hline
\end{tabular}
\begin{tabular}[t]{|l@{\hspace{2.5mm}}l@{\hspace{2.5mm}}l@{\hspace{2.5mm}}l@{\hspace{2.5mm}}l|}
\hline
\multicolumn{5}{|c|}{z = 2}\\
\hline
\hline
3 & 0 & 3 & 0 & 3\\
0 & 0 & \bluf{1} & 0 & 0\\
3 & \bluf{1} & 4 & \bluf{1} & 3\\
0 & 0 & \bluf{1} & 0 & \gref{2}\\
3 & 0 & 3 & \gref{2} & 5\\
\hline
\end{tabular}
\begin{tabular}[t]{|l@{\hspace{2.5mm}}l@{\hspace{2.5mm}}l@{\hspace{2.5mm}}l@{\hspace{2.5mm}}l|}
\hline
\multicolumn{5}{|c|}{z = 3}\\
\hline
\hline
4 & 0 & 4 & 0 & 4\\
0 & 0 & 6 & 0 & 0\\
4 & 6 & 5 & 6 & 4\\
0 & 0 & 6 & 0 & 7\\
4 & 0 & 4 & 7 & 6\\
\hline
\end{tabular}\\
c)
\begin{tabular}[t]{|l@{\hspace{2.5mm}}l@{\hspace{2.5mm}}l|}
\hline
\multicolumn{3}{|c|}{z = 1}\\
\hline
\hline
2 & 1 & 1\\
1 & 0 & 2\\
1 & 2 & 3\\
\hline
\end{tabular}
\begin{tabular}[t]{|l@{\hspace{2.5mm}}l@{\hspace{2.5mm}}l|}
\hline
\multicolumn{3}{|c|}{z = 2}\\
\hline
\hline
3 & \bluf{2} & 5\\
\bluf{1} & \bf{1} & \cyf{4}\\
4 & \magf{3} & 6\\
\hline
\end{tabular}
\begin{tabular}[t]{|l@{\hspace{2.5mm}}l@{\hspace{2.5mm}}l|}
\hline
\multicolumn{3}{|c|}{z = 3}\\
\hline
\hline
5 & 7 & 4\\
7 & 0 & 8\\
4 & 8 & 6\\
\hline
\end{tabular}
\end{minipage}
\caption{a) A segment label map on a grid of
$3 \times 3 \times 2$ voxels. b) The correct corresponding topological label
map. Colors are in accordance with the 1-cells shown in
Fig.~\ref{figure:geo-example2}b. c) The topological label map of the block depicted
in bold font in (a). 1-cells are colored in accordance with 
Fig.~\ref{figure:geo-example2}c. The 1-cells labeled 1 and 2 are merged during
label reconciliation while the 1-cells labeled 3 and 4 are merged in Step 4 of 
the block-wise processing.}
\label{figure:geo-example1}
\end{SCfigure}
\begin{figure*}
a) \hspace{0.27\textwidth} b) \hspace{0.3\textwidth} c)\\
\includegraphics[width=0.32\textwidth]{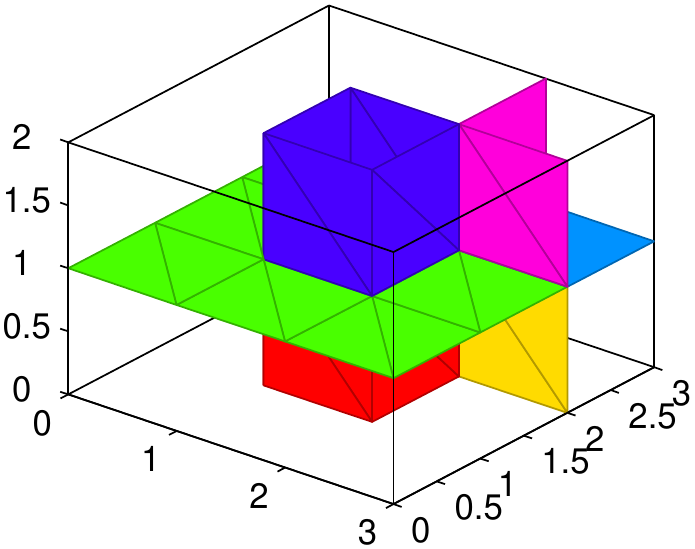}
\includegraphics[width=0.32\textwidth]{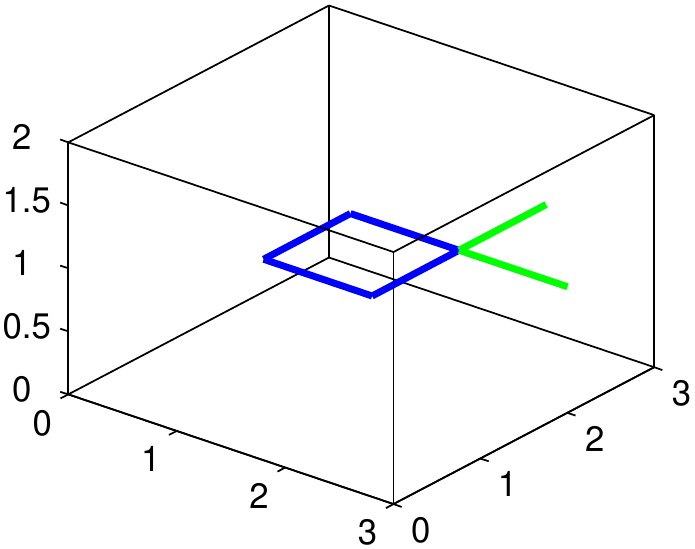}
\includegraphics[width=0.32\textwidth]{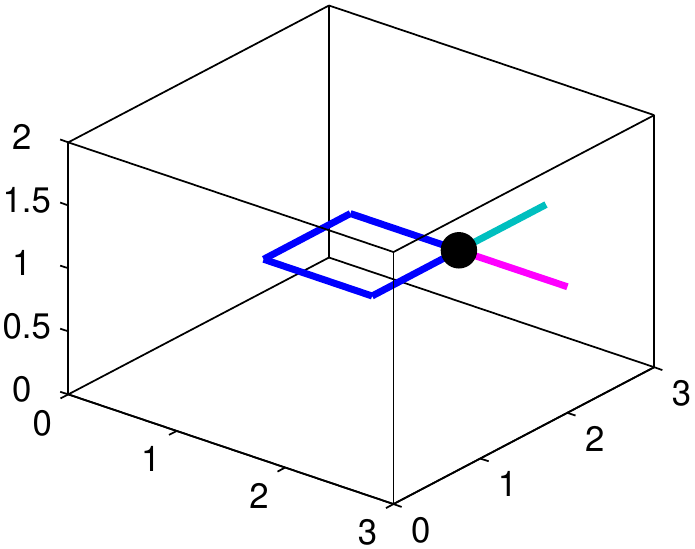}
\caption{Connected components of 2-cells (a) and 1-cells (b) defined
by the topological label map in Fig. \ref{figure:geo-example1}b. c)
1-cells and the 0-cell from a block-wise labeling where Step 4 of the
algorithm is omitted.}
\label{figure:geo-example2}
\end{figure*}
\begin{theorem}
$\tau$ and $\tau'$ are isomorphic.
\end{theorem}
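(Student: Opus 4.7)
The plan is a case analysis by cell dimension that chains together the propositions already proved and invokes Step 4 to close the two gaps explicitly left open. The observation guiding the proof is that isomorphism has been established for all cells of dimension $\geq 2$ and partially for 1-cells and 0-cells, so what remains is small and local: the ``$\Rightarrow$'' direction of (\ref{B}) for 1-cells and the ``$\Leftarrow$'' direction of (\ref{A}) for 0-cells.

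For 3- and 2-cells the first two propositions of the section give both (\ref{A}) and (\ref{B}) directly. For 1-cells, Proposition \ref{proposition:1-cell-activity} gives (\ref{A}) and Proposition \ref{proposition:1-cell-implication} gives the ``$\Leftarrow$'' of (\ref{B}). For 0-cells, Proposition \ref{proposition:0-cells} gives the ``$\Rightarrow$'' of (\ref{A}), while (\ref{B}) is free: Algorithm \ref{algorithm:0-cell-labeling} hands out a distinct label to every active 0-cell, so $\tau(u) = \tau(v) \neq 0$ (and the analogous statement for $\tau'$) forces $u = v$.

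For the 1-cell implication, I would take $\tau(u) = \tau(v) \neq 0$ and use Definition \ref{definition:connected-components} to obtain a path $u = w_0, \ldots, w_k = v$ of 1-cells in which each consecutive pair is linked by $\leftrightarrow$ and all $w_i$ bound the same connected components of 2-cells. For 1-cells, $w_i \leftrightarrow w_{i+1}$ is witnessed by a common 0-cell $t_0$ with $w_i, w_{i+1} \in \Gamma(t_0)$; the 2-cell isomorphism guarantees that the 1-components of $w_i$ and $w_{i+1}$ in $\tau'$ bound the same 2-cell components. Step 4 therefore reconciles them, so $\tau'(w_i) = \tau'(w_{i+1})$, and transitivity along the path yields $\tau'(u) = \tau'(v)$. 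For the ``$\Leftarrow$'' of (\ref{A}) for 0-cells, I would argue by contradiction: if $\tau(t) = 0$ then no non-zero 1-cell label is unique in $\Gamma(t)$ for $\tau$; by the 1-cell isomorphism just proved, the label classes on $\Gamma(t)$ agree between $\tau$ and $\tau'$, so the same holds for $\tau'$. Step 4's activity re-evaluation at $t$ then overwrites $\tau'(t)$ with $0$, contradicting $\tau'(t) \neq 0$.

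The main obstacle is justifying that Step 4 really does sweep up every deficiency left by Steps 1--3. This reduces to a structural observation about the $\leftrightarrow$ relation on 1-cells: every 1-cell adjacency of this type is witnessed by a common 0-cell in $\Gamma$, so any false split of a 1-component that survives Steps 1--3 is necessarily localized at a 0-cell, which is exactly where Step 4 operates. The concrete counterexample given in Fig.~\ref{figure:geo-example1} and Fig.~\ref{figure:geo-example2} shows that Step 4 is not optional, but the preceding propositions do essentially all of the work and the theorem is then a clean bookkeeping argument.
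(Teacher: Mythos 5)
Your proof is correct and takes the same route as the paper: isomorphism for 3-, 2-, 1- and 0-cells is assembled from the preceding propositions, and the two remaining implications (the forward direction of (\ref{B}) for 1-cells and the missing direction of (\ref{A}) for 0-cells) are attributed to Step 4 -- in fact the paper's own proof merely asserts that Step 4 enforces these, whereas you supply the actual mechanism (every $\leftrightarrow$-link between 1-cells is witnessed by a common 0-cell, which is precisely where Step 4 acts) and the observation that (\ref{B}) is vacuous for 0-cells since each active 0-cell receives a distinct label. The only slip is cosmetic: since (\ref{A}) is phrased in terms of labels being zero while Proposition~\ref{proposition:0-cells} is phrased in terms of labels being non-zero, the direction it supplies is the ``$\Leftarrow$'' of (\ref{A}) and the one you go on to prove is the ``$\Rightarrow$'', the opposite of how you name them; the mathematical content is nevertheless correctly identified.
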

\begin{proof}
In addition to the implications proven above, Step 4 of the block-wise
processing enforces:

(1) For all 1-cells $u, v \in T$ holds
$\tau(u) = \tau(v) \Leftarrow \tau'(u) = \tau'(v)$.

(2) For all 0-cells $t \in T$ holds $\tau(t) \not= 0 \Leftarrow \tau'(t) \not= 0$.
\end{proof}
By virtue of this theorem, the block-wise processing is correct.
\subsection{Complexity}
The runtime overhead introduced by the merging of labels is 
$O((N + M_1 + M_2) \log (M_1 + M_2))$ where $N$ is the number of cells within 
regions of overlap. Time $O(N \log (M_1 + M_2))$ is used for the $O(N)$ calls 
of \emph{union}, whereas time $O((M_1 + M_2) \log (M_1 + M_2))$ is required for
the $M_1 + M_2$ \emph{find}-operations. In practice, this overhead is 
negligible compared to the runtime of the connected component labeling.
\subsection{Parallelization}
The algorithm can be used in two different settings: Blocks can be
processed consecutively in order to extract the geometry of a large
volume segmentation in limited RAM. Perhaps more interestingly,
blocks can be processed in parallel, possibly on several machines,
with virtually no process synchronization or inter-process
communication.

Indeed, if it were not for parallelization, the block-wise connected
component labeling could have been implemented simpler, starting with
only the 2-cells, followed by the disambiguation and reconciliation of
their labels across all blocks even before any 1-cell or 0-cell is
labeled. This would render Step 4 of the block-wise processing unneccessary.
However, the program would have to wait until the 2-cells of \emph{all}
blocks have been labeled before it could label the first 1-cell. In
contrast, the proposed algorithm starts labeling the 1-cells within a
block as soon as the labeling of 2-cells within that block is
finished, regardless of the progress on other blocks.
\section{Redundant Storage for Constant Time Access}
\label{section:geo-redundant-storage}

The algorithm proposed in the last section labels segments, faces between
segments, the curves between faces and the points between curves on
the toplogical grid. The output is a topological label map that provides
constant time access to the label at any topological coordinate. It allows to 
determine in constant time whether there is a face, curve, or point at a given 
location and if so, to determine its label. This is important, e.g.~for the 
visualization of 2-dimensional slices of a segmentation that show not only
segments but also faces, curves, and points. The algorithm makes explicit
the bounding relations between the geometric objects.

However, not only the labels of individual cells and the adjacency of geometric
objects are important but also the set of all cells that belong to the same
component. For image analysis, it can for instance be useful to compute the mean
gray value over a face between two segments. Yet, a list of all 2-cells of the
face cannot be obtained in constant time from the topological grid labeling. 
Thus, a redundant representation of geometry is constructed that contains for 
each $j$-component a list of all its $j$-cells.

This redundant representation as well as the topological grid labeling are 
stored on the hard drive using the Hierarchical Data Format (HDF5). HDF5 was
originally developed by the National Center for Supercomputing Applications 
(NCSA) and is now maintained by the non-profit HDF5-Group\footnote{\url{www.hdf5group.org}}.
It is widely used, especially in the life sciences 
\cite{dougherty-2009}. An HDF5 file contains two principal types of objects, 
groups and datasets. Datasets represent the actual storage containers and are 
multi-dimensional arrays of a unique type, while groups represent an 
organizational concept analogous to a directory that enables the user to 
hierarchically structure the data within the file. Furthermore, attributes may 
be assigned to any dataset or group and contain meta information pertaining to 
the data stored within these objects.

Two HDF5 files are used here. The first file is associated with the labeling 
algorithm of Section \ref{section:geo-block-wise}.
Its structure is depicted in Fig.~\ref{figure:geo-hdf5}a. For each block
and its index $j$ in the order of blocks, a sub-group named $j$ is created in
the group \emph{blocks}. The sub-group $j$ contains the dataset \emph{topological-grid},
a 3-dimensional array that stores the topological label map of the block.
Furthermore, it contains datasets for the neighborhood relations
of connected components as well as the label offsets of the block which are
computed during label disambiguation. During label reconciliation, the datasets
\emph{relabeling-}$k$ and \emph{neighborhood-}$k$ are created in the main file
to store the labeling and the neighborhood relations of $k$-components of the 
entire topological grid.

\begin{SCfigure}
\begin{minipage}[t]{7.7cm}
a)\hspace{3.5cm}b)\vspace{-2mm}\\
\begin{tabular}[t]{llll}
\hline
$t$ & $d$ & Name\\
\hline
D & 1 & segmentation-shape\\
D & 1 & block-shape\\
G &   & blocks\\
G &   & \hspace{0.3cm}$\langle b \rangle$\\
D & 3 & \hspace{0.6cm}topological-grid\\
D & 1 & \hspace{0.6cm}max-labels\\
D & 1 & \hspace{0.6cm}label-offsets\\
D & 2 & \hspace{0.6cm}neighborhood-0\\
D & 2 & \hspace{0.6cm}neighborhood-1\\
D & 2 & \hspace{0.6cm}neighborhood-2\\
D & 1 & max-labels\\
D & 1 & relabeling-1\\
D & 1 & relabeling-2\\
D & 2 & neighborhood-0\\
D & 2 & neighborhood-1\\
D & 2 & neighborhood-2\\
\hline
\end{tabular}
\begin{tabular}[t]{llll}
\hline
$t$ & $d$ & Name\\
\hline
A & 1 & number-of-bins\\
D & 1 & segmentation-shape\\
D & 1 & max-labels\\
D & 2 & 0-cells\\
G &   & 1-components\\
G &   & \hspace{0.3cm}bin-$\langle b \rangle$\\
D & 2 & \hspace{0.6cm}$\langle q \rangle$-$\langle p \rangle$\\
G &   & 2-components\\
G &   & \hspace{0.3cm}bin-$\langle b \rangle$\\
D & 2 & \hspace{0.6cm}$\langle q \rangle$-$\langle p \rangle$\\
G &   & 3-components\\
G &   & \hspace{0.3cm}bin-$\langle b \rangle$\\
D & 2 & \hspace{0.6cm}$\langle q \rangle$-$\langle p \rangle$\\
D & 1 & parts-counters-1\\
D & 1 & parts-counters-2\\
D & 1 & parts-counters-3\\
\hline
\end{tabular}
\end{minipage}
\caption{Two HDF5 files store the information extracted during 
block-wise processing. Along with each data item, its type $t$ (a 
group $G$, a dataset $D$, or an attribute $A$) and dimension $d$ are 
shown. a) The 1st file provides constant-time access to the label of 
any cell as well as to the neighborhoods of connected components of 
cells. b) The 2nd file provides constant-time access to the coordinate
lists of entire segments, faces and curves.}
\label{figure:geo-hdf5}
\end{SCfigure}

Using this file, constant time access to the label of a given cell works as
follows. After identifying a block $k$ to which the $j$-cell of interest
belongs, the label $l$ is read off from the dataset \emph{topological-grid} of
that block. Except for 3-cells and inactive cells, the offset $m$ associated
with cell order $j$ and block $k$ is loaded and the dataset \emph{relabeling-}$j$
accessed at location $l+m$ for the globally consistent label of the cell. In
practice, all data except the topological label map can be kept in RAM.

The second HDF5 file stores one coordinate list for each 1-, 2-, and 3-component
as well as the coordinates of all active 0-cells. Initially, the most straight 
forward group hierarchy was chosen for this file: Three groups associated with 
segments, faces, and curves, each containing one extendible dataset for each
component. In addition, the coordinates of all 0-cells were stored in one 
2-dimensional array. This group hierarchy turned out to be problematic when more 
than 10$^6$ datasets were created per group, using version 1.8.4 of the HDF5
library. To overcome this problem, the more complex group hierarchy shown in 
Fig.~\ref{figure:geo-hdf5}b is used instead. In each of the groups 
\emph{1-components}, \emph{2-components}, and \emph{3-components}, a fixed 
number of sub-groups is created into which datasets containing coordinates lists 
are distributed. Also for performance reasons, the use of extendible datasets was
dropped, which means that due to the block-wise nature of the algorithm, one 
complete $j$-component may be associated with several datasets representing its 
fragments from different blocks. The HDF5 file contains the datasets 
\emph{parts-counters-}$j$ for the number of datasets a single $j$-connected is 
split into.

\subsection{Alternatives}

A more obvious way to store the coordinate lists is to create one binary file
for each list. The use of many files offers the advantage that the coordinate
lists may easily be extended by appending to files, an important asset for
block-wise processing. However, bearing in mind that a segmentation of 2,000$^3$
voxels easily contains in excess of $10^6$ connected components, the vast
amount of files places a heavy burden on the file system, making simple
operations such as copying the data extremely time consuming. In contrast, HDF5
was designed to organize large numbers of binary datasets.

A good alternative to HDF5 is a relational database. Experiments with a 
PostgreSQL database showed promising performance. This approach was nevertheless
abandoned in favor of HDF5 because the necessity to install and configure a 
database might deter potential users from trying out the software.

\section{Conclusion}
\label{section:geo-conclusion}

In this article, a new algorithm for geometry and topology extraction
from large volume segmentations is proposed. In contrast to previous
methods, this algorithms processes volume segmentations in a
block-wise fashion. This facilitates geometry and topology extraction
from large volume segmentations with limited RAM and in parallel. The
geometry is stored in HDF5 files that provide constant time access to 
the labels of segments, faces between segments, curves between faces
and points between curves at any location as well as to lists of 
coordinates that constitute these objects. This representation makes a 
geometric analysis of large volume segmentations practical.

\appendix
\section{Compiling and Installing the Software}
The CGP software is provided as C++ source code with a CMake 2.6 build system
for the command line tools \emph{cgpx} and \emph{cgpr} as well as for MATLAB mex
files. CGP depends on the HDF5 Library (version 1.8.4 or higher\footnote{\url{http://www.hdfgroup.org/HDF5}} and can optionally make use of the 
Message Passing Interface\footnote{\url{http://www.mcs.anl.gov/research/projects/mpi}} (MPI), 
and the Visualization Toolkit\footnote{\url{http://vtk.org}} vtk, . An example segmentation of $50 \times 50 \times 50$ voxels is included, 
along with the according outputs of \emph{cgpx} and \emph{cgpr}. The following 
paragraphs describe how CGP can be compiled on a system that has HDF5 installed.
\subsection{Linux/UNIX and GNU C++}
Unpack the source archive and create a build directory. Execute CMake in this
directory, providing the path to the source as the last parameter:
\begin{small}
\begin{verbatim}
unzip cgp.zip
mkdir build-cgp && cd build-cgp
CMake ../cgp
make
\end{verbatim}
\end{small}
If HDF5, MATLAB, or vtk are installed in a non-standard way, CMake will not find
them automatically. In this case, paths to include files and libraries need to
be set manually in the above call, e.g.~for HDF5:
\begin{small}
\begin{verbatim}
CMake -DHDF5_INCLUDE_DIR=$HOME/inc \
  -DHDF5_LIBRARY=$HOME/lib/libhdf5.so \
  ../cgp
\end{verbatim}
\end{small}
\subsection{Microsoft Windows and VisualStudio}
Unpack the source archive, create a build directory, and use the CMake GUI to
configure. If CMake does not find HDF5, MATLAB, or vtk although they are
installed, set the include paths and library paths for these packages manually. 
CMake will generate a VisualStudio solution file. Open this file, build the 
target ALL\_BUILD in release mode, and install the binaries by building 
the target INSTALL.
\section{Using the Software}
\subsection{From the Command Line}
\label{section:geo-command-line}
The command line tools \emph{cgpx} and \emph{cgpr} compute a representation of
the geometry and topology of a volume segmentation. The segmentation needs to be
stored as a 3-dimensional array of 32-bit unsigned integers in one dataset in
the root group of an HDF5 file. The command line tools are then used as follows:

cgpx $\langle input\ file \rangle$
$\langle dataset \rangle$
$\langle b1 \rangle$
$\langle b2 \rangle$
$\langle b3 \rangle$
$\langle output\ file\rangle$

cgpr $\langle input\ file \rangle\ \langle output\ file \rangle$.

The first tool constructs a labeled topological grid in a block-wise fashion
using the block shape $b1 \times b2 \times b3$. The second tool writes a list of
topological coordinates for each geometric object. Suppose, as an example, that
a segmentation of 2,000$^3$ voxels is stored as a 3-dimensional array in the
dataset \emph{seg} of the HDF5 file \emph{segmentation.h5}. On a desktop
computer equipped with 2~GB of RAM, a block-size of $200^3$ voxels is
reasonable, so a representation of the geometry and topology of the segmentation
can be obtained like this:

\begin{small}
\begin{verbatim}
cgpx segmentation.h5 seg 200 200 200 grid.h5
cgpr grid.h5 objects.h5
\end{verbatim}
\end{small}
In order to process several blocks in parallel, invoke the command line tool
\emph{cgpx} via \emph{mpiexec}, e.g.

\begin{small}
\begin{verbatim}
mpiexec -n 2 cgpx segmentation.h5 seg 
  200 200 200 grid.h5
\end{verbatim}
\end{small}
\subsection{From MATLAB}
In MATLAB, segmentations are conveniently stored as 3-dimensional arrays whose
entries are 32-bit unsigned integers that correspond to segment labels. In order
to extract the geometry and topology from a segmentation, the array has to be
written to an HDF5 file by means of the function \emph{cgp\_save}. The following
call of \emph{cgp\_save} writes the array $S$ as the dataset \emph{seg} into the
HDF5 file \emph{seg.h5}:

\begin{small}\verb$cgp_save('seg.h5', '/seg', S);$\end{small}\\
Geometry and topology extraction can now be performed either from the command
line, using the tools \emph{cgpx} and \emph{cgpr} as described in Section
\ref{section:geo-command-line}, or directly from MATLAB, using the according
mex-functions:

\begin{small}
\begin{verbatim}
cgpx('seg.h5', 'seg', uint32([b1 b2 b3]),
  'grid.h5');
cgpr('grid.h5', 'objects.h5');
\end{verbatim}
\end{small}
where $b1 \times b2 \times b3$ specifies the block shape.

A number of functions named with the prefix \emph{cgp} can be used to
selectively load data from the geometry file. In the following example, one
curve and its adjacent faces are plotted.
\begin{small}
\begin{verbatim}
desc = cgp_open('objects.h5');
curve_id = 100;
hold on;
tcl = cgp_load_object(desc, 1, curve_id);
cgp_plot_1cells(tcl);
neighbors = desc.neighborhoods{2}(curve_id,:);
for j = 1:length(neighbors)
  if neighbors(j) == 0
      break;
  end
  tcl = cgp_load_object(desc,2,neighbors(j));
  tri = cgp_triangulate(tcl);
  cgp_plot_triangulation(tri, rand(1,3),0.7);
end
hold off;
cgp_close(desc);
\end{verbatim}
\end{small}

A 0-cell and its adjacent curves are plotted as follows:
\begin{small}
\begin{verbatim}
desc = cgp_open('objects.h5');
point_id = 100;
hold on;
tcl = cgp_load_object(desc, 0, point_id);
cgp_plot_0cells(tcl);
neighbors = desc.neighborhoods{1}(point_id,:);
for j = 1:length(neighbors)
  if neighbors(j) == 0
      break;
  end
  tcl = cgp_load_object(desc,1,neighbors(j));
  cgp_plot_1cells(tcl, rand(1,3));
end
hold off;
cgp_close(desc);
\end{verbatim}
\end{small}
\subsection{From C++}
The C++ API for parallelized geometry and topology extraction is defined in the
header files \emph{cgp\_hdf5.hxx}, \emph{CgpxMaster.hxx}, and 
\emph{CgpxWorker.hxx}. The construction of the topological grid is invoked by
classes
\begin{small}
\begin{verbatim}
template<class T, class C> class CgpxMaster;
template<class T, class C> class CgpxWorker;
\end{verbatim}
\end{small}
that implement a master-worker-scheme using MPI. The type $T$ is used for 
labels of geometric objects; unsigned integers of at least 32 bits should be 
used. The type $C$ is used for coordinates to navigate in arrays. 16-bit 
integers are sufficient if the segmentation is smaller than 32769 voxels in 
each dimension.

The coordinate lists of all geometric objects can be constructed from the
topological grid by means of the function
\begin{small}
\begin{verbatim}
template<class T, class C>
void geometry3blockwise(
  const hid_t&, // input HDF5 file
  const hid_t&  // output HDF5 file
);
\end{verbatim}
\end{small}
The source files of the command line tools, 
\begin{small}
\begin{verbatim}
src/cmd/cgpx.cxx
src/cmd/cgpr.cxx
\end{verbatim}
\end{small}
show the interested reader how the classes and function are used.
\end{document}